\documentclass[a4paper,onecolumn,10pt]{article}

\usepackage[utf8]{inputenc} 
\usepackage[T1]{fontenc}
\usepackage[english]{babel}
\usepackage{csquotes}
\usepackage[a4paper,total={160mm,250mm}]{geometry}
\usepackage[skip=2.5mm,indent=5mm]{parskip}
\usepackage{setspace}
\usepackage{amsmath}  
\usepackage{amsfonts}
\usepackage{amssymb}
\usepackage{amsthm}
\usepackage{hyperref}
\usepackage{graphicx}         
\usepackage[font=normalsize]{caption}
\usepackage{subcaption}
\usepackage{url}    
\usepackage{algorithm}
\usepackage{algpseudocode}
\usepackage{tikz}

\hypersetup{colorlinks=true,linktoc=all,linkcolor=black,citecolor=black,
		    filecolor=black,urlcolor=black}
		    
\newtheorem{theorem}{Theorem}[section]
\newtheorem{proposition}[theorem]{Proposition}
\newtheorem{lemma}[theorem]{Lemma}

\newcommand{\N}{\mathbb{N}}
\newcommand{\Z}{\mathbb{Z}}
\newcommand{\F}{\mathbb{F}}
\newcommand{\bH}{\mathbf{H}}
\newcommand{\bT}{\mathbf{T}}
\newcommand{\e}{\mathbf{e}}
\newcommand{\s}{\mathbf{s}}
\newcommand{\he}{\hat{\mathbf{e}}}
\newcommand{\wt}{\mathsf{wt}}
\newcommand{\wtr}{\mathsf{wt_r}}

\newcommand\blfootnote[1]{%
  \begin{NoHyper}
  \renewcommand\thefootnote{}\footnote{#1}%
  \addtocounter{footnote}{-1}%
  \end{NoHyper}}


\begin{document}

\title{Analysis of the Error-Correcting Radius of a Renormalisation Decoder for Kitaev’s Toric Code}

\author{Wouter Rozendaal\footnotemark[1] \footnotemark[2] \and Gilles Zémor\footnotemark[1] \footnotemark[3]}

\maketitle

\blfootnote{This work was presented in part at ISIT 2023, see \cite{RZ23}}
\renewcommand{\thefootnote}{\fnsymbol{footnote}}
\footnotetext[1]{Institut de Mathématiques de Bordeaux, UMR 5251, Université de Bordeaux, France} 
\footnotetext[3]{Institut universitaire de France}
\footnotetext[2]{Email: \textbf{wouter.rozendaal@math.u-bordeaux.fr}}\footnotetext[3]{Email: \textbf{gilles.zemor@math.u-bordeaux.fr}}
\renewcommand{\thefootnote}{\arabic{footnote}}

\begin{abstract}
Kitaev's toric code is arguably the most studied quantum code and is expected to be implemented in future generations of quantum computers. The renormalisation decoders introduced by Duclos-Cianci and Poulin exhibit one of the best trade-offs between efficiency and speed, but one question that was left open is how they handle worst-case or adversarial errors, i.e. what is the order of magnitude of the smallest weight of an error pattern that will be wrongly decoded. We initiate such a study involving a simple hard-decision and deterministic version of a renormalisation decoder. We exhibit an uncorrectable error pattern whose weight scales like $d^{1/2}$ and prove that the decoder corrects all error patterns of weight less than $\frac{5}{6}  d^{\, \log_{2}(6/5)}$, where $d$ is the minimum distance of the toric code.
\end{abstract}

\section{Introduction}
\label{sec:introduction}

Quantum computers are of major interest as they are capable of efficiently solving certain computational problems that are considered difficult with classical computers. However, the physical implementation of a quantum computer still remains a problem due to decoherence errors. An important challenge in the realisation of a full-scale quantum computer therefore consists in finding ways to protect quantum information from errors. 

So-called topological codes, and in particular Kitaev's toric code \cite{K03, DKLP02}, are expected to be at the core of the first generation of quantum computers that will incorporate error protection. Kitaev's code is one of the oldest quantum error-correcting codes, and by far the most studied one. Although it protects only two logical qubits, its appeal comes in part from its simple structure, its planar layout useful for physical implementations, and parity-check operators of low weight that only involve neighbouring qubits. Indeed, reliable syndrome measurements need low-weight check operators.

Quantum error-correcting codes need a fast decoder that will process the classical information obtained from quantum syndrome measurements, so as to be able to regularly put arrays of qubits back into their intended states. To make full use of the toric code, one therefore also requires an efficient decoding scheme. Many decoding algorithms have been put forward, for a comprehensive list see for example the introduction of \cite{DN21}.

The Minimum Weight Perfect Matching algorithm \cite{DKLP02}, based upon Edmonds' blossom algorithm \cite{E65}, is the standard reference since it outputs optimal solutions on bit-flip channels. However, it has a high time-complexity of $O(n^{3})$ \cite{K09}, where $n$ is the length of the toric quantum code. At the other end, the recent union-find decoder of Delfosse and Nickerson \cite{DN21} is in quasi-linear time and is a remarkable compromise between performance and speed. However, the decoder of \cite{DN21} does not lend itself to parallelisation, and in that respect, the decoder that stands out is arguably the renormalisation decoder of Duclos-Cianci and Poulin \cite{DP10[1], DP10[2]} that can be made to run in time $O(\log_2 n)$, with overall complexity $O(n \log_2 n)$. 

The focus of the present paper is the renormalisation decoder of Duclos-Cianci and Poulin, or rather decoders, since the renormalisation idea is quite versatile and lends itself to many variations. In the Kitaev code, qubits are indexed by the $n$ edges of a square tiling of a torus. The renormalisation idea stems from the fact that this square tiling contains a tower of $\frac{1}{2} \log_2 n$ subtilings, each isomorphic to the original tiling. Each subtiling is contained in the previous one and has four times fewer edges than the latter. Local transformations are then applied to the error pattern, so that the original decoding problem becomes an instance of the decoding problem on a Kitaev code of length $n/4$. A~decoder then applies this procedure recursively $\frac{1}{2} \log_2 n$ times.

Renormalisation decoders were tested in \cite{DP10[1], DP10[2]} over depolarising and bit-flip channels, and found to exhibit very good thresholds when combined with some message-passing techniques. One feature that remained a mystery however, is their behaviour over adversarial channels, i.e. their worst-case behaviour. Worst-case behaviour is not necessarily the prime feature of a decoder, but it is important nevertheless because it is difficult to establish a precise model for errors that occur during quantum computations and also because it governs the speed with which decoding error probabilities tend to zero in the low channel error regime.

In the present paper we study this worst-case behaviour issue. To this end we introduce a simple and relatively natural version of a renormalisation decoder, that in particular does not use a priori error probabilities and acts deterministically, basing its decisions purely on local syndrome values. We find that the renormalisation decoder allows for ``fractal-like'' error patterns that are wrongly decoded and scale as $d^{1/2}$, where $d$ is the code's minimum distance.  We also prove a lower bound of the form $\frac{5}{6} d^{\, \log_{2}(6/5)}$ for the weight of an uncorrectable error pattern. Finally, we argue that the sub-linear behaviour of the minimum distance will be a feature of any renormalisation decoder. 

The paper is organised as follows. Section~\ref{sec:toric code} recalls the construction of Kitaev's toric quantum code and its decoding. Section~\ref{sec:renormalisation decoder} describes the renormalisation decoder that will be studied throughout this paper. Section~\ref{sec:bit-flip channel} analyses the decoder's performance on the bit-flip channel and gives a numerically established threshold probability. Section~\ref{sec:adversarial channel} is the core of the paper, bounding the error-correcting radius of the renormalisation decoder. Section~\ref{sec:fractal errors} gives fractal-like wrongly decoded error patterns for more general deterministic renormalisation decoders.

\section{Kitaev's toric quantum code}
\label{sec:toric code}

An important class of quantum error-correcting codes is the class of CSS codes introduced independently by Calderbank-Shor and Steane \cite{CS96, S96}. A CSS code can be described by two binary matrices $\bH_X$ and $\bH_Z$ of size $r_X \times n$ and $r_Z \times n$ respectively, where $r_X, r_Z$ and $n$ are positive integers. The row spaces of $\bH_X$ and $\bH_Z$ should be orthogonal, in other words $\bH_X  \bH_Z^\intercal = \mathbf{0}$. The matrices $\bH_X$ and $\bH_Z$ should be thought of as the parity-check matrices of two binary codes $C_X$ and $C_Z$ satisfying $C_Z^\perp \subset C_X$. The length of the quantum code, i.e. the number of physical qubits of the code, equals $n$. It's dimension, i.e. the number of logical qubits, equals~$n - \mathrm{rank} \, \bH_X - \mathrm{rank} \, \bH_Z$.  The minimum distance $d$ of the code is equal to $\min(d_X,d_Z)$, where $d_X$ (resp. $d_Z$) is equal to the smallest Hamming weight of a non-zero codeword of $C_X$ (resp. $C_Z)$ that is not in the row space of $\bH_Z$ (resp. $\bH_X$). 

The toric code \cite{K03, DKLP02} proposed by Kitaev is a CSS code that encodes $2$ logical qubits into $n$ physical qubits and achieves a distance of $\sqrt{n/2}$ (a variant of the toric code \cite{BM07} exists with improved parameters $[[n,2,\sqrt{n}]]$). We briefly review the construction of the Kitaev code and its decoding.

\subsection{The toric code}
\label{subsec:toric code}

For $m \in \N$, consider the graph $\bT = (V, E)$ that is the Cayley graph of the additive group $\Z / m\Z \times \Z / m\Z$ with generators $(\pm 1, 0)$ and $(0, \pm 1)$. Specifically, for $m \geq 2$, this is a 4-regular graph that tiles a $2$-dimensional torus by squares. The graph has $|V| = m^{2}$ vertices and $|E| = n = 2m^{2}$ edges. 

The two matrices $\bH_{X}$ and $\bH_{Z}$ are of size $n/2 \times n$ and their columns and rows should be thought of as indexed respectively by the edges and vertices of the graph $\bT$. The matrix $\bH_X$ is simply the vertex-edge incidence matrix of $\bT$ and its rows describe elementary cocycles of the graph. Spelling it out, every vertex $(x, y) \in \Z / m\Z \times \Z / m\Z$ gives rise to a row of $\bH_{X}$ that is a binary vector of weight 4, and whose 1-coordinates are indexed by the edges that connect $(x, y)$ to $(x+1, y),(x, y+1),(x-1, y),(x, y-1)$. The rows of $\bH_{Z}$ correspond to elementary cycles, or faces of the graph, meaning all 4-cycles of the form $(x,y)-(x,y+1)-(x+1,y+1)-(x+1,y)-(x,y)$.

Any elementary cocycle has an even (0 or 2) number of edges in common with a face, which means that every row of $\bH_{X}$ is orthogonal to every row of $\bH_{Z}$. This is exactly the property we need for $\bH_{X}$ and $\bH_{Z}$ to define a quantum CSS code. 

\begin{figure}[htbp]
\centering
\begin{tikzpicture}[scale=0.8]
\draw[step=1] (0,0) grid (5,5);
\draw[line width=0.75mm] (1,2) -- (3,2);
\draw[line width=0.75mm] (2,1) -- (2,3);
\draw[line width=0.75mm] (3,3) -- (4,3);
\draw[line width=0.75mm] (3,3) -- (3,4);
\draw[line width=0.75mm] (4,4) -- (3,4);
\draw[line width=0.75mm] (4,4) -- (4,3);
\end{tikzpicture}
\caption{Cayley graph $\bT$ of the group $\Z / 5\Z \times \Z / 5\Z$ tiling a $2$-dimensional torus. The thick edges represent an elementary cocycle (left) and an elementary cycle (right).}
\label{fig:toric code}
\end{figure}
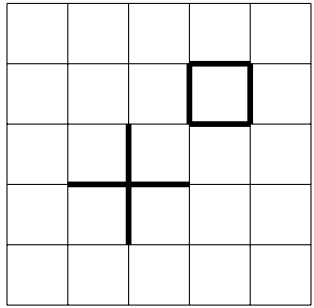 

\textbf{Code dimension.} Since every column of $\bH_{X}$ has weight $2$, the sum of all the rows of $\bH_{X}$ is $0$. The same argument applies to $\bH_{Z}$ and one can check that the dimension of both the row spaces of $\bH_{X}$ and $\bH_Z$ is $|V|-1$. This yields a dimension of $|E|-2|V|+2 = 2$ for the Kitaev code. 

\textbf{Minimum distance.} The $X$-minimum distance $d_X$ is equal to the smallest length of a cycle of $\bT$ that is not a sum of elementary faces. The smallest length of such a homologically non-trivial cycle is seen to be $m$ and is given by a cycle that keeps one coordinate constant. The $Z$-minimum distance $d_Z$ is to equal $d_X$ since elementary faces are elementary cocycles of the dual graph of $\bT$, which is isomorphic to $\bT$. Hence, the minimum distance of the code is $d = m$. 

Summarising, Kitaev's toric code is a quantum error-correcting CSS code with geometrically local parity-check operators of weight $4$ and parameters $[[2m^{2},2,m]] = [[n,2,\sqrt{n/2}]]$. 

\subsection{Decoding the toric code}
\label{subsec:decoding toric code}

The graph $\bT$ and its dual are isomorphic so the decoding problem for $X$-errors is identical as that for \mbox{$Z$-errors}, and we will just focus on one type of error. A $Z$-error pattern is described by a binary vector of length~$n$, $\e \in \F_2^n$. A decoder takes as input the {\em syndrome} of the error $\e$, $\sigma(\e) := \bH_X \e^\intercal$, and outputs a binary vector $\he \in \F_2^n$. The decoder succeeds if $\he = \e + \mathbf{f}$, where $\mathbf{f}$ is in the row space of $\bH_Z$, equivalently, if $\mathbf{f}$ is a homologically trivial cycle, i.e. is a sum of faces of $\bT$. 

Note that the decoder should not require further input, since we are interested in adversarial errors, for which there is no a priori information on the error $\e$ which would come from a channel model for example. 

One should keep in mind that the syndrome vector $\s = \sigma(\e)$ has its coordinates indexed by the vertex set $V$ and can therefore be identified with a subset $S$ of vertices, of even cardinality. The decoder output $\he$ can always be thought of, up to an addition of trivial cycles, as a union of edge-disjoint paths whose endpoints lie in $S$. In \mbox{particular}, what the Minimum Weight Perfect Matching decoder does, is find $\he$ of minimum Hamming weight such that \mbox{$\sigma(\he)=\sigma(\e)$}. Renormalisation decoders use a different approach to output a solution $\he$. We describe their general strategy in the next section and introduce a simple hard-decision and deterministic version of a renormalisation decoder.

\section{A renormalisation decoder}
\label{sec:renormalisation decoder}

The renormalisation idea presented by Duclos-Cianci and Poulin in \cite{DP10[1], DP10[2]} amounts to have a layer of codes together with a simple procedure that passes the error model from one level to the next. One is then able to recursively solve the decoding problem by \emph{renormalising} the error model at each step. The decoders of Duclos-Cianci and Poulin are somewhat loosely defined in the sense that they allow local procedures to adapt to the channel model. To study worst-case behaviour, we need a deterministic version of the decoder that is defined independently of any channel. Such a renormalisation decoder is described bellow.

\subsection{High-level description of the decoder}
\label{subsec:high-level description}

For $k \in \N$, consider the group $V_{k} = \Z/2^{k}\Z \times \Z/2^{k}\Z$ and the graph $\bT_{k} = (V_{k}, E_{k})$ used to construct Kitaev's code. For $i \in [[0,k-1]]$, consider the Cayley graph $\bT_{i} = (V_{i}, E_{i})$, where $V_{i}$ is the subgroup of $\Z/2^{k}\Z \times \Z/2^{k}\Z$ generated by $(\pm 2^{k-i}, 0)$ and $(0, \pm 2^{k-i})$. The vertices $V_i$ of $\bT_i$ can be thought of as a sublattice of $V_{i+1}$, and if we connect vertices of $V_i$ that are at distance $2$ in the graph $\bT_{i+1}$, we obtain the graph $\bT_{i}$. In this process we have {\em renormalised} paths of length~$2$ to paths of length $1$. 

Consider the syndrome vector $\s_{k} = \sigma(\e)$ of an error $\e$, corresponding to a set $S_{k} \subset V_{k}$ containing an even number of vertices of $\bT_{k}$. The decoding problem consists in finding a vector $\he$, identified by a set of edges of $\bT_{k}$, satisfying $\sigma(\he) = \s_{k}$. Suppose first that $S_{k}$ is a subset of $V_{k-1}$. Since an edge of $\bT_{k-1}$ corresponds to a path of length $2$ in $\bT_{k}$, if there exists an edge-pattern in $\bT_{k-1}$ that is consistent with $\s_{k}$, then it induces an edge-pattern in $\bT_{k}$ whose characteristic vector $\he$ satisfies $\sigma(\he) = \s_{k}$. As a result, we can reduce the original decoding problem to a decoding problem in the smaller graph $\bT_{k-1}$. Suppose now that $S_{k}$ is not included in $V_{k-1}$. Then we locally pair up certain vertices of $S_{k}$ and shift others when needed so that the new error syndrome becomes a subset of $V_{k-1}$. 

The edges used in the reduction process are given by a vector $\he_{k}$ satisfying $\s_{k-1} := \sigma(\he_{k}) + \s_{k} \subset V_{k-1}$, where we identify the syndrome $\s_{k-1}$ and its corresponding set of vertices. This reduction process allows us to recursively solve the decoding problem. The complete decoding procedure on the original graph $\bT_{k}$ is given by the following iterative algorithm.

\begin{algorithm}
\caption{Renormalisation decoder}
\label{alg:decoder}
\begin{algorithmic}[0]
\State \textbf{Input:} $\s_{k} = \sigma(\e) \in \F_2^n$, the syndrome of the error $\e$
\State \textbf{Output:} $\he \in \F_2^n$ satisfying $\sigma(\he) = \s_{k}$
\vspace*{0.1cm}
\State \textbf{Initialisation:} set $\he \gets \mathbf{0}$, $\s \gets \s_{k}$, $i \gets k$
\vspace*{0.1cm}
\While{$\s \neq \mathbf{0}$ and $i > 0$}
\State execute the \emph{reduction procedure} on the graph $\bT_{i}$
\State get $\he_{i}$ satisfying $\s_{i-1} := \sigma(\he_{i}) + \s \subset V_{i-1}$
\State set $\he \gets \he + \he_{i}$, $\s \gets \s_{i-1}$, $i \gets i-1$
\EndWhile
\vspace*{0.1cm}
\State \textbf{return} $\he$
\end{algorithmic}
\end{algorithm}

After $k$ steps, the decoder outputs $\he = \sum_{i=1}^{k} \he_{i}$. It remains to check that the output vector $\he$ has the same syndrome has the error vector $\e$, that is to say $\sigma(\he) = \s_{k}$. Observe that for every $j \in [[1,k]]$, \[\s_{j} + \sum_{i=1}^{j} \sigma(\he_i) = \s_{j} + \sigma(\he_j) + \sum_{i=1}^{j-1} \sigma(\he_i) = \s_{j-1} + \sum_{i=1}^{j-1} \sigma(\he_i).\] Hence, we have $\s_{k} + \sigma(\he) = \s_{k} + \sum_{i=1}^{k} \sigma(\he_i) = \s_{0} \subset V_{0}$. The set $V_0$ has only one vertex and since syndrome vectors have even weight, we conclude that $\s_{0} = \mathbf{0}$ and so $\sigma(\he) = \s_{k}$.

\subsection{Precise description of the decoder}
\label{subsec:precise description}

We now describe how to obtain the vectors $\he_{i}$ from the syndrome vectors $\s_{i}$. The original procedure of \cite{DP10[1],DP10[2]} relied on a priori probabilities for coordinates of the original error vector $\e$. Since we are interested in worst-case behaviour, we want a purely deterministic procedure that will work independently of any channel model. We propose the algorithm described below, that is arguably close to the simplest possible from a computational complexity point of view. 

\textbf{Blocks and cells.} In order to precisely define the reduction process of the decoder, we introduce the following notions of blocks and cells. A \emph{cell} of $\bT_{i}$ is defined as an elementary cycle, or face of the graph. A \emph{block} of $\bT_{i}$ is defined as a cell of the graph $\bT_{i-1}$. By construction of the graphs $\bT_{i}$, each block is composed of exactly $4$ cells denoted $A, B, C$ and $D$. For each cell of the tiling, we denote by $l, r, t$ and $b$ the left, right, top and bottom edges of the cell. Moreover, we respectively denote by $\alpha, \beta, \gamma$ and $\delta$ the top-left, top-right, bottom-left and bottom-right vertices of the given cell. These definitions and there relations are shown explicitly in Figure~\ref{fig:precise description}. 

Starting from the graph $\bT_k$, we can iteratively consider the blocks of each subtiling $\bT_i$ and view them as the cells of $\bT_{i-1}$. This allows one to perceive the tower of subtilings that lies at the heart of the renormalisation idea. 

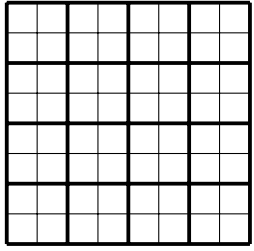
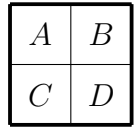
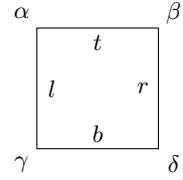
\begin{figure}[htbp]
\centering
\begin{subfigure}[b]{0.4\textwidth}
\centering
\begin{tikzpicture}[scale=0.4]
\draw[step=1] (0,0) grid (8,8);
\draw[step=2,line width=0.5mm] (0,0) grid (8,8);
\end{tikzpicture}
\caption{Graph $\bT_{3}$ tiling a torus by squares. The thick edges represent the blocks of the graph $\bT_{3}$, or equivalently, the cells of the graph $\bT_{2}$.}
\label{subfig:renormalisation}
\end{subfigure}
\hspace*{0.5cm}
\begin{subfigure}[b]{0.24\textwidth}
\centering
\begin{tikzpicture}[scale=0.8]
\draw[step=1] (0,0) grid (2,2);
\draw[step=2,line width=0.5mm] (0,0) grid (2,2);
\node at (0.5,1.5) {{\large $A$}};
\node at (1.5,1.5) {{\large $B$}};
\node at (0.5,0.5) {{\large $C$}};
\node at (1.5,0.5) {{\large $D$}};
\end{tikzpicture}
\vspace*{0.8cm}
\caption{Block of a graph $\bT_i$, subdivided into $4$ cells.}
\label{subfig:block}
\end{subfigure}
\hspace*{0.5cm}
\begin{subfigure}[b]{0.24\textwidth}
\centering
\begin{tikzpicture}[scale=0.8]
\draw[step=2] (0,0) grid (2,2);
\node at (0.25,1) {{\small $l$}};
\node at (1.75,1) {{\small $r$}};
\node at (1,0.25) {{\small $b$}};
\node at (1,1.75) {{\small $t$}};
\node at (-0.25,2.25) {{\small $\alpha$}};
\node at (2.25,2.25) {{\small $\beta$}};
\node at (-0.25,-0.25) {{\small $\gamma$}};
\node at (2.25,-0.25) {{\small $\delta$}};
\end{tikzpicture}
\vspace*{0.4cm}
\caption{Cell of a graph $\bT_i$ with labelled edges and vertices.}
\label{subfig:cell}
\end{subfigure}
\caption{Division of a graph $\bT_i$ into blocks and cells.}
\label{fig:precise description}
\end{figure}

\textbf{Reduction procedure.} To reduce the decoding problem from a syndrome set $S_i \subset V_i$ to a syndrome set $S_{i-1} \subset V_{i-1}$, we proceed in three steps. We start by locally pairing up certain syndrome vertices so as to remove them from the syndrome set. Then, as they may remain vertices in $V_i \setminus V_{i-1}$, we shift those syndrome vertices to move them to the next sublattice. 

In step $1$, we go through all the blocks of $\bT_{i}$ and locally pair up diagonally opposed syndrome vertices in $D$ cells. This is depicted in Figure~\ref{fig:instruction D}.

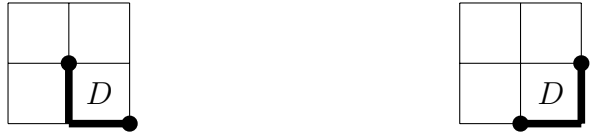
\begin{figure}[htbp]
\centering
\subfloat{
\begin{tikzpicture}[scale=0.8]
\draw[step=1] (0,0) grid (2,2);
\draw[line width=1mm] (1,1) -- (1,0);
\draw[line width=1mm] (1,0) -- (2,0);
\draw[line width=1mm] (1,1) circle (2pt);
\draw[line width=1mm] (2,0) circle (2pt);
\node at (1.5,0.5) {{\large $D$}};
\end{tikzpicture}}
\hfil
\subfloat{
\begin{tikzpicture}[scale=0.8]
\draw[step=1] (0,0) grid (2,2);
\draw[line width=1mm] (2,1) -- (2,0);
\draw[line width=1mm] (1,0) -- (2,0);
\draw[line width=1mm] (2,1) circle (2pt);
\draw[line width=1mm] (1,0) circle (2pt);
\node at (1.5,0.5) {{\large $D$}};
\end{tikzpicture}} 
\caption{Instructions in $D$ cells. Syndrome vertices and their pairing are represented in thicker font. If $\alpha$ and $\delta$ are in $S_i$, then pair them up
via $l$ and $b$. If $\beta$ and $\gamma$ are in $S_i$, then pair them up
via $b$ and $r$.}
\label{fig:instruction D}
\end{figure}

In step $2$, we go through all the blocks of $\bT_{i}$ again and locally pair up certain neighbouring syndrome vertices in $B$ and $C$ cells as shown in Figure~\ref{fig:instruction B,C}.

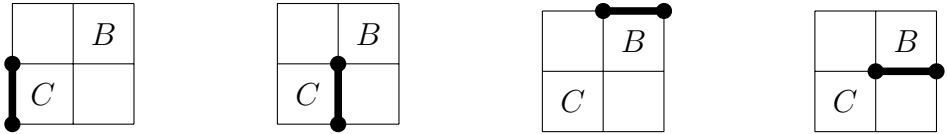
\begin{figure}[htbp]
\centering
\subfloat{
\begin{tikzpicture}[scale=0.8]
\draw[step=1] (0,0) grid (2,2);
\draw[line width=1mm] (0,0) -- (0,1);
\draw[line width=1mm] (0,0) circle (2pt);
\draw[line width=1mm] (0,1) circle (2pt);
\node at (1.5,1.5) {{\large $B$}};
\node at (0.5,0.5) {{\large $C$}};
\end{tikzpicture}}
\hfil
\subfloat{
\begin{tikzpicture}[scale=0.8]
\draw[step=1] (0,0) grid (2,2);
\draw[line width=1mm] (1,0) -- (1,1);
\draw[line width=1mm] (1,0) circle (2pt);
\draw[line width=1mm] (1,1) circle (2pt);
\node at (1.5,1.5) {{\large $B$}};
\node at (0.5,0.5) {{\large $C$}};
\end{tikzpicture}}
\hfil 
\subfloat{
\begin{tikzpicture}[scale=0.8]
\draw[step=1] (0,0) grid (2,2);
\draw[line width=1mm] (1,2) -- (2,2);
\draw[line width=1mm] (1,2) circle (2pt);
\draw[line width=1mm] (2,2) circle (2pt);
\node at (1.5,1.5) {{\large $B$}};
\node at (0.5,0.5) {{\large $C$}};
\end{tikzpicture}}
\hfil 
\subfloat{
\begin{tikzpicture}[scale=0.8]
\draw[step=1] (0,0) grid (2,2);
\draw[line width=1mm] (1,1) -- (2,1);
\draw[line width=1mm] (1,1) circle (2pt);
\draw[line width=1mm] (2,1) circle (2pt);
\node at (1.5,1.5) {{\large $B$}};
\node at (0.5,0.5) {{\large $C$}};
\end{tikzpicture}}
\caption{Instructions in $B$ and $C$ cells. Syndrome vertices and their pairing are represented in thicker font. In the cell $C$, if $\alpha$ and $\gamma$ are in $S_i$, then then pair them up via $l$, if $\beta$ and $\delta$ are in $S_i$, then then pair them up via $r$. In the cell $B$, if $\alpha$ and $\beta$ are in $S_i$, then then pair them up via $t$ and if $\gamma$ and $\delta$ are in $S_i$, then then pair them up via $b$.}
\label{fig:instruction B,C}
\end{figure}

In step $3$, we go through all the blocks of $\bT_{i}$ once more and shift any remaining syndrome vertex in $A$ cells to their top-left corner as shown in Figure~\ref{fig:instruction A}.

\begin{figure}[h!tbp]
\centering
\subfloat{
\begin{tikzpicture}[scale=0.8]
\draw[step=1] (0,0) grid (2,2);
\draw[line width=1mm] (0,2) -- (1,2);
\draw[line width=1mm] (1,2) circle (2pt);
\node at (0.5,1.5) {{\large $A$}};
\end{tikzpicture}}
\hfil
\subfloat{
\begin{tikzpicture}[scale=0.8]
\draw[step=1] (0,0) grid (2,2);
\draw[line width=1mm] (0,2) -- (0,1);
\draw[line width=1mm] (0,1) circle (2pt);
\node at (0.5,1.5) {{\large $A$}};
\end{tikzpicture}}
\hfil 
\subfloat{
\begin{tikzpicture}[scale=0.8]
\draw[step=1] (0,0) grid (2,2);
\draw[line width=1mm] (0,2) -- (0,1);
\draw[line width=1mm] (0,1) -- (1,1);
\draw[line width=1mm] (1,1) circle (2pt);
\node at (0.5,1.5) {{\large $A$}};
\end{tikzpicture}} 
\caption{Instructions in $A$ cells. Syndrome vertices and their travel path are represented in thicker font. If $\beta$ is in $S_i$, then shift it to $\alpha$ via $t$. If $\gamma$ is in $S_i$, then shift it to $\alpha$ via $l$. If $\delta$ is in $S_i$, then shift it to $\alpha$ via $l$ and $b$.}
\label{fig:instruction A}
\end{figure}
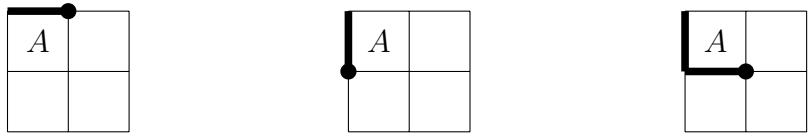

Note that the choice of edges that pair up diagonally opposed vertices in $D$ cells or diagonally shift vertices in $A$ cells is somewhat arbitrary. This doesn't matter however, since other shortest paths are equivalent to the ones we chose. 

All the edges we used in these three steps are given by a vector $\he_{i} \in \F_2^n$. More precisely, starting from the zero vector, the construction of $\he_i$ is done by flipping the bits of $\he_i$ that are indexed by those edges. The reduction procedure is then given by the following iterative algorithm where we go through all the blocks of $\bT_i$ three times.

\begin{algorithm}[h!tbp]
\caption{Reduction procedure}
\label{alg:reduction}
\begin{algorithmic}[0]
\State \textbf{Input:} $\s_{i} \in  \F_2^n$ corresponding to a set of syndrome vertices $S_i \subset V_{i}$
\State \textbf{Output:} $\he_{i} \in \F_2^n$ such that $\s_{i-1} := \sigma(\he_{i}) + \s_{i}$ corresponds to a set of vertices $S_{i-1} \subset V_{i-1}$
\vspace*{0.1cm}
\State \textbf{Initialisation:} set $\he_i \gets \mathbf{0}$, $\s \gets \s_i$ and let $S$ be the vertex set corresponding to $\s$
\vspace*{0.1cm}
\For {every block in $\bT_i$} \Comment Step 1 \hspace*{10cm} \phantom{ }
\State In the $D$ cell:
\State \hspace*{0.5cm} If $\alpha$ and $\delta$ are in $S$, then flip the bits of $\he_i$ indexed by $l$ and $b$
\State \hspace*{0.5cm} If $\beta$ and $\gamma$ are in $S$, then flip the bits of $\he_i$ indexed by $b$ and $r$
\State \hspace*{0.5cm} \textit{Remark: if $\alpha, \beta, \gamma, \delta$ are all in $S$, then we only flip the bits indexed by $l$ and $r$}
\EndFor
\vspace*{0.1cm}
\State Update the syndrome: add to $\s$ the characteristic vectors of all the syndrome vertices that have been paired up in order to remove them from the syndrome set $S$
\vspace*{0.1cm}
\For {every block in $\bT_i$} \Comment Step 2 \hspace*{10cm} \phantom{ }
\State In the cell $C$:
\State \hspace*{0.5cm} If $\alpha$ and $\gamma$ are in $S$, then flip the bit of $\he_i$ indexed by $l$   
\State \hspace*{0.5cm} If $\beta$ and $\delta$ are in $S$, then flip the bit of $\he_i$ indexed by $r$
\State In the cell $B$:
\State \hspace*{0.5cm} If $\alpha$ and $\beta$ are in $S$, then flip the bit of $\he_i$ indexed by $t$
\State \hspace*{0.5cm} If $\gamma$ and $\delta$ are in $S$, then flip the bit of $\he_i$ indexed by $b$
\EndFor
\vspace*{0.1cm}
\State Update the syndrome: add to $\s$ the characteristic vectors of all the syndrome vertices that have been paired up in order to remove them from the syndrome set $S$
\vspace*{0.1cm}
\For {every block in $\bT_i$} \Comment Step 3 \hspace*{10cm} \phantom{ }
\State In the $A$ cell:
\State \hspace*{0.5cm} If $\beta$ is in $S$, then flip the bit of $\he_i$ indexed by $t$ 
\State \hspace*{0.5cm} If $\gamma$ is in $S$, then flip the bit of $\he_i$ indexed by $l$
\State \hspace*{0.5cm} If $\delta$ is in $S$, then flip the bits of $\he_i$ indexed by $l$ and $b$
\EndFor
\vspace*{0.1cm}
\State \textbf{return} $\he_i$
\end{algorithmic}
\end{algorithm}

Note that at the start of each step, the current syndrome $\s$ is equal to $\sigma(\he_{i}) + \s_{i}$. This equality is obvious after the initialisation since $\he_i = \mathbf{0}$ and $\s = \s_i$. After step $1$, certain syndrome vertices have been paired up together and the edges that make up the paths are added to $\he_i$. The syndrome of $\he_i$ at this stage thus corresponds exactly to the set of paired up syndrome vertices. Hence, we have that $\s$, which corresponds to the set of syndrome vertices minus the ones that have been paired up, is equal to $\s_{i} + \sigma(\he_{i})$. The same argument shows that the equality remains true after step $2$. 

\begin{proposition}
\label{prop:decoder}
The reduction procedure is well-defined and reduces the decoding \mbox{problem} from a subtiling to the next one.
\end{proposition}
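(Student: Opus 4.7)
The plan is to establish the two claims of the proposition separately: the well-definedness of the reduction procedure and the inclusion $\s_{i-1} \subset V_{i-1}$.

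For well-definedness, I would check that $\he_i$ is constructed without ambiguity. In each of the three steps, the set of edges to flip is prescribed by $\s$ at the start of the step (with $\s$ explicitly updated between steps), and within a single step the edges touched by distinct blocks are disjoint: step~1 flips only edges of $D$ cells, step~2 only edges of $B$ and $C$ cells, and step~3 only edges of $A$ cells, so a brief case check at the block boundaries rules out collisions. Since $\F_2$-addition is commutative, the order in which the blocks are processed does not affect $\he_i$.

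The heart of the proof is the reduction property, for which I would argue that after step~3 no vertex of $V_i \setminus V_{i-1}$ remains in the syndrome. The key observation is that each non-corner vertex of $\bT_i$ sits in exactly one of five positions of some block (center, mid-top, mid-bottom, mid-left, mid-right) and is identified with $\beta$, $\gamma$ or $\delta$ of a unique $A$ cell: for a block $B$, its center is $\delta(A_B)$, its mid-top and mid-left are $\beta(A_B)$ and $\gamma(A_B)$, while its mid-bottom and mid-right coincide with $\beta(A_{B'})$ and $\gamma(A_{B''})$ for the blocks $B'$ below and $B''$ to the right of $B$. Consequently, every non-corner vertex is addressed by step~3 of exactly one block.

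It then remains to verify that, for a single $A$ cell, the combined conditional flips of step~3 indeed remove each of $\beta, \gamma, \delta$ from $\s$ whenever it is present, possibly toggling only the corner $\alpha$ as a side effect. Writing $b_\beta, b_\gamma, b_\delta \in \{0,1\}$ for the indicator values of these three vertices in $\s$ at the start of step~3, a direct count of syndrome toggles induced by flipping $t$ $b_\beta$ times, $l$ $b_\gamma + b_\delta$ times and $b$ $b_\delta$ times shows that the net parities of toggles are $b_\beta, b_\gamma, b_\delta$ at $\beta, \gamma, \delta$ respectively (with $l$ cancelling out when both $\gamma$ and $\delta$ are present) and $b_\beta + b_\gamma + b_\delta \pmod{2}$ at $\alpha$; the only mild subtlety is this cancellation. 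Combined with the identity $\s + \sigma(\he_i) = \s_i$ maintained throughout the algorithm and already noted in the excerpt just before the proposition, this yields $\s_{i-1} = \s_i + \sigma(\he_i) \subset V_{i-1}$, which is the desired reduction (its even parity being automatic since toric-code syndromes always have even weight). The main obstacle is really only this bookkeeping; nothing is conceptually hard once the $A$-cell correspondence is spelled out.
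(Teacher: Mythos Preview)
Your overall plan matches the paper's, and your treatment of step~3 (identifying each non-corner vertex with a unique $\beta$, $\gamma$ or $\delta$ of an $A$ cell and then computing the parities of the induced toggles) is correct and in fact more explicit than the paper's. The gap is in your handling of well-definedness, and it propagates into the reduction argument.

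Your notion of well-definedness is that the \emph{edges} flipped by distinct blocks within a given step are disjoint; this is true, and it does guarantee that $\he_i$ does not depend on the order in which blocks are visited. But the paper's notion is different and stronger: no syndrome \emph{vertex} is paired with two different partners in the same step. It is this stronger statement that is needed for the identity $\s = \s_i + \sigma(\he_i)$ you cite. In step~2, although the flipped edges are disjoint, the $B$ and $C$ cells do share vertices, both within a block and across adjacent blocks (e.g.\ the mid-right vertex of a block is simultaneously $\delta$ of that block's $B$ cell and $\alpha$ of the neighbouring block's $C$ cell). If such a vertex $v$ were paired in both a $B$ instruction and a $C$ instruction, two edges incident to $v$ would be flipped, the contribution at $v$ to $\sigma(\he_i)$ would cancel, and the algorithm's explicit syndrome update (``remove the paired-up vertices from $S$'') would no longer agree with $\s_i+\sigma(\he_i)$. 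Step~3 would then act on the wrong $\s$, and $v$ would survive in $\s_{i-1}$.

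The missing step is exactly the one the paper supplies: if $v$ is affected by both a $B$ and a $C$ instruction, its two partners are diagonally opposite vertices of some $D$ cell, which step~1 has already cleared, so the conflict never arises. This is the reason for the three-step schedule, and the excerpt you invoke (``already noted just before the proposition'') is only fully justified once this check has been carried out.
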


\begin{proof}
To verify that the reduction process is indeed well-defined, we need to check that there are no conflicting instructions. More precisely, we need to verify that, given a syndrome set at the beginning of a step, we don't pair a syndrome vertex with more than one other vertex or that we don't shift it several times.

In step $1$ the instructions are well-defined since $D$ cells of different blocks share no vertices. Hence, we only pair up a syndrome vertex at most once as shown in Figure~\ref{fig:instruction D}. At step $2$ one has to be a bit more careful since $B$ and $C$ cells share vertices within a block and also between different blocks. Suppose that a shared syndrome vertex $v$ is affected by both instructions in a $B$ and a $C$ cell. This means that there exists two adjacent syndrome vertices to $v$, $v_{B}$ and $v_{C}$, such that $v_{B}$ belongs to $B$ and $v_{C}$ belongs to $C$. Then $v_{B}$ and $v_{C}$ are two diagonally opposite syndrome vertices of a $D$ cell. This is impossible since we removed all such syndrome vertices in step $1$. The multi-step schedule thus ensures that the instructions are done as intended and we pair up a syndrome vertex at most once as shown in Figure~\ref{fig:instruction B,C}. Finally, the instructions are also well-defined in step $3$ since $A$ cells of different blocks share no vertices. Thus, we shift a syndrome vertex only once as shown in Figure~\ref{fig:instruction A}. 

It now remains to verify that the reduction procedure reduces the decoding from a subtiling to the next one. We thus have to check that given the syndrome vector $\s_i$, corresponding to a syndrome set in $V_i$, we obtain an output vector $\he_i$ such that \mbox{$\s_{i-1} := \sigma(\he_{i}) + \s_{i}$} corresponds to a syndrome set in $V_{i-1}$.

After steps $1$ and $2$, certain syndrome vertices have been paired up together and the edges that make up the paths are added to $\he_i$. The syndrome of $\he_i$ at this stage thus corresponds exactly to the set of paired up syndrome vertices. Let us now see what happens in step $3$. Observe that the set of all the vertices we consider in $A$ cells corresponds exactly to $V_i \setminus V_{i-1}$, i.e. to all the vertices that are not corner vertices of a block. In step $3$, we thus shift any remaining syndrome vertex that is not in $V_{i-1}$ to the top-left corner of its corresponding block. The edges that make up the paths that connect the syndrome vertices to their corresponding corner vertices are then added to $\he_i$. Since the syndrome vertices have all been pushed into the vertex set of the next subtiling, the sum \mbox{$\s_{i-1} := \sigma(\he_{i}) + \s_{i}$} indeed corresponds to a vertex subset of $V_{i-1}$.
\end{proof}

Note that in the reduction procedure, we update the current syndrome between each step. One might ask why we don't update it immediately, after having paired up or shifted a syndrome vertex, or after having visited a block. While this doesn't affect the outcome of the algorithm, it is relevant from a computational complexity point of view and allows the procedure to be parallelised.

\begin{proposition}
\label{prop:time-complexity}
The decoder's time-complexity is $O(n \log_2 n)$ and can be parallelised to $O(\log_2 n)$, where $n$ is the length of the toric code. 
\end{proposition}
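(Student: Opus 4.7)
The plan is to count work level by level. Since $n = 2m^2$ with $m = 2^k$, the top-level graph $\bT_k$ satisfies $k = \frac{1}{2}\log_2(n/2) = \Theta(\log_2 n)$, so Algorithm~\ref{alg:decoder} performs at most $k$ iterations of its outer while loop before $i$ reaches $0$.

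For the sequential bound, I would analyse one call to Algorithm~\ref{alg:reduction} on $\bT_i$. The reduction consists of three passes through the blocks of $\bT_i$; at each block the work depends only on the syndrome values at its four vertices and flips at most a constant number of bits of $\he_i$. The number of blocks at level $i$ is $|V_{i-1}| = 4^{i-1}$, so the block-by-block work per level is $O(|V_{i-1}|)$. Maintaining $\he$ and $\s$ as length-$n$ vectors and performing the termination test $\s \neq \mathbf{0}$ costs $O(n)$ per iteration, which dominates. Summing over the $\Theta(\log_2 n)$ outer iterations gives the stated $O(n \log_2 n)$ bound. (A sharper analysis using $\sum_{i=1}^{k} |V_{i-1}| = O(n)$ would actually yield $O(n)$, but the looser estimate is enough.)

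For the parallel bound, I would observe that within a single call to Algorithm~\ref{alg:reduction} the three steps must be executed in order, since each uses the syndrome updated by the previous step, but within a single step all blocks may be processed simultaneously. Step~$1$ and Step~$3$ are trivially parallel because $D$ cells (resp.\ $A$ cells) of distinct blocks share no vertices at all. Step~$2$ is parallel because distinct blocks share only vertices, not edges, so the bit-flips performed by different blocks touch disjoint coordinates of $\he_i$ and cannot collide. Each step therefore has parallel depth $O(1)$, one reduction has parallel depth $O(1)$, and composing the $\Theta(\log_2 n)$ outer iterations produces total parallel depth $O(\log_2 n)$.

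The only delicate point is ensuring that the parallel execution of Step~$2$ really is well-defined in the presence of shared vertices between blocks, i.e.\ that no syndrome vertex is scheduled to be paired up twice. Non-conflict of the pairing instructions in Step~$2$ is precisely what was already verified in the proof of Proposition~\ref{prop:decoder}, so it can be invoked directly. This is the main place one needs to be careful, but since the work has already been done it does not constitute a real obstacle.
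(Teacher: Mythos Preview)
Your proposal is correct and follows essentially the same approach as the paper: bound each reduction stage by $O(n)$ work (the paper phrases this as the total number of block operations being ``linearly bounded by $n$, the size of the graph $\bT_k$''), multiply by the $\Theta(\log_2 n)$ levels, and for the parallel bound argue that the three steps of a reduction can each be executed across all blocks simultaneously because the local instructions are non-conflicting---invoking Proposition~\ref{prop:decoder} for the Step~2 case exactly as the paper's own argument does implicitly. Your additional remarks (the sharper geometric sum $\sum_i |V_{i-1}|=O(n)$ and the explicit edge-disjointness justification for Step~2) are refinements rather than a different route.
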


\begin{proof}
For every $i \in [[1,k]]$, the reduction procedure considers every block of $\bT_{i}$ three times. Within a block, a constant number of operations are executed on the vertices, to test if they belong to the error syndrome, and on the edges, to flip the corresponding bits if needed. The total number of operations in each reduction procedure is therefore linearly bounded by $n$, the size of the graph $\bT_{k}$. Since the recursion ends after $\frac{1}{2} \log_2 n$ iterations, the overall time-complexity is $O(n \log_2 n)$.

The multi-step schedule guarantees that the instructions of the reduction procedure act independently of one another. Moreover, since they are given locally inside blocks and since the current syndrome is updated only between steps of a reduction stage, the instructions can be executed simultaneously. As a result, the decoding scheme can be parallelised and one can achieve a running time proportional to $\log_2 n$, the number of reductions.  
\end{proof}

Finally, we remark that one should not define the decoder in too simple a way. For example, we could question the need for steps $1$ and $2$ in the reduction procedure since step $3$ alone ensures the reduction process by pushing all improperly placed syndrome vertices of a block towards its top-left corner. The reason is that omitting steps $1$ and $2$ would allow a degenerate propagation of error patterns of small weight. We have not found a reduction procedure in less than $3$ steps that avoids conflicting instructions between blocks and avoids wrongly decoded constant-weight error patterns.

For instance, suppose that we do not follow the instructions given in step $2$. Consider the error $\e$, in the graph $\bT_{k}$, consisting only of the top edge $t$ of a $B$ cell of some block. Then the syndrome set $S_{k}$ consists of the end-vertices of that edge. Instead of pairing these two syndrome vertices together immediately, the instructions in step $3$ shift the left end-vertex of the edge to the vertex $\alpha$ of the $A$ cell of that block. Hence, the sum of the error pattern and of the pattern given by the decoder is now composed of the two edges that connect the top vertices of the given block. After renormalisation, we then have a combined pattern of weight $2$ that is the edge of a cell in $\bT_{k-1}$. If this edge happens to be the top edge $t$ of some $B$ cell, then we can repeat the previous argument. We thus can exhibit an error pattern of weight $1$ that, if placed in the correct position, doubles in size at every reduction step. This is problematic since it allows for uncorrectable error patterns of constant weight. Similarly, if we omit step $1$ from the instructions, then one can find an example of an error pattern of weight $2$ that expands and is wrongly decoded.

\section{Analysis over the bit-flip channel}
\label{sec:bit-flip channel}

Before analysing our simple hard-decision renormalisation decoder over the adversarial channel, it is natural to ask how it performs against random errors. We test its performance on the bit-flip channel and numerically determine the threshold probability $p_{0}$ of our decoder using Monte Carlo simulations. 

\begin{figure}[h!t]
\centering
\includegraphics[width=0.75\textwidth]{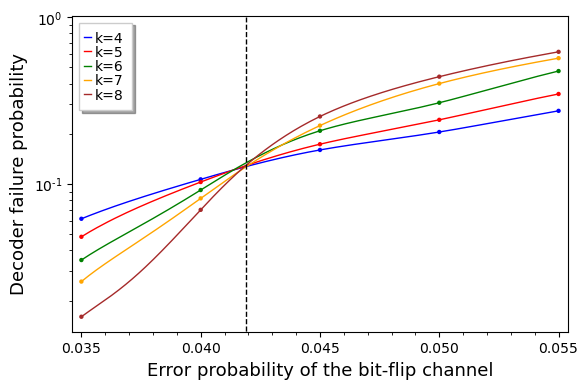}
\caption{Results of the Monte Carlo simulations over the bit-flip channel. The failure probability of our decoder is computed by simulating $5000$ decoding cycles for each $k \in \lbrace 4, 5, 6, 7, 8 \rbrace$ and for each $p \in \lbrace 0.035, 0.04, 0.045, 0.05, 0.055 \rbrace$.}
\label{fig:results}
\end{figure}

The results of the simulations are presented in Figure~\ref{fig:results} and we estimate that the threshold probability is $p_{0} \sim 4.2 \%$. As a comparison, the renormalisation group decoder of Duclos-Cianci and Poulin in its simplest form, without being enhanced by belief propagation techniques, yields a threshold of $p_{0} \sim 7.8\%$ over the depolarisation channel \cite{DP10[1]}, which translates into a threshold probability for the bit-flip channel of $p_{0} \sim 5.2\%$. The performance of the present decoder is therefore reasonably close, and the discrepancy can be explained in part by the fact that our simplified decoder does not use the a priori transition probability of the channel. We conclude that our simple version has captured the essence of a renormalisation decoder.

\section{Analysis over the adversarial channel}
\label{sec:adversarial channel}

We now turn to analysing worst-case errors. More precisely, we are interested in finding the \emph{error-correcting radius} of the decoder, i.e. the largest integer $\omega$ such that any error pattern $\e$ of Hamming weight $\wt(\e) \leq \omega$ is decoded correctly. 
 
For $k \in \N$, consider the tiling $\bT_{k}$ and for $i \in [[0,k-1]]$, the subtilings $\bT_{i}$. Suppose we are given an error vector $\e$ and consider its error syndrome $\s_{k} \subset V_{k}$. Recall that the decoding algorithm defines $k$ intermediate vectors $\he_k,\ldots,\he_1$ which it then adds up to produce the output vector $\he$. Let us define $\e_k=\e$ to be the original error vector, and inductively, for $i=k-1,\ldots, 0$, $\ \e_i := \e_{i+1}+\he_{i+1}$. 

First note that one obtains $\e_{i}$ from $\e_{i+1}$ after one reduction stage since by definition $\e_i = \e_{i+1}+\he_{i+1}$. We say that $\e_{i+1}$ is a \emph{preimage} of $\e_i$. Secondly, remark that an ``intermediate'' error vector $\e_i$ has its syndrome in the vertex set $V_i$ since \[ \sigma(\e_i) = \sigma(\e_{k} + \sum_{j=i+1}^{k} \he_{j}) = \s_k + \sum_{j=i+1}^{k} \sigma(\he_{j}) = \s_i \subset V_i.\] Finally, note that the edges corresponding to $\e_{0} = \e + \he$ form a cycle since at the end of the decoding scheme, both the vectors $\e$ and $\he$ have the same syndrome. Moreover, this cycle is homologically non-trivial if and only if the decoder wrongly decodes the original error vector $\e_k$. 

To analyse the error-correcting radius of the decoder, we will work by starting from a non-trivial cycle $\e_0$ and track how slowly the weights of the errors $\e_i$ can grow when we go back in time to reverse-engineer the decoder's actions (hence the choice of indexation).

\subsection{A wrongly decoded error pattern of weight \texorpdfstring{$\sim d^{1/2}$}{}}
\label{subsec:wrongly decoded error}

We derive an upper bound on the error-correcting radius $\omega$ by finding a small-weight error pattern $\e_{k}$ that is decoded incorrectly, i.e. such that $\e_{0}$ is a homologically non-trivial cycle. 

It is somewhat natural to look for such an error pattern within a minimal cycle of the graph $\bT_{k}$, so we construct it on the first horizontal ``line'' of the graph. If we restrict ourselves to this line, we only consider steps $2$ and $3$ of the reduction process. Moreover, since the error syndrome will be a subset of the vertices of the first line, the decoder only adds edges of this line to the output vector $\he$. 

The idea is to define the vector $\e_1$ as a half a cycle of $\bT_1$ (which is liable to be wrongly decoded by any decoder), then, as we move up the indexes $i$, this half-cycle is expanded into half-cycles of $\bT_i$ but we may regularly introduce ``holes'' in them, creating a fractal-like structure. The construction of $\e_{5}$ starting from a non-trivial cycle $\e_{0}$ is shown in Figure~\ref{fig:wrongly decoded error} and it is easily checked that we indeed obtain $\e_{0}$ when we apply the renormalisation decoder to $\e_{5}$.

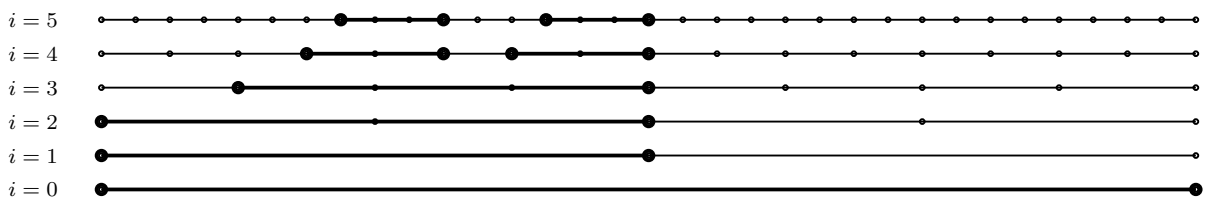
\begin{figure}[htbp]
\centering
\begin{tikzpicture}[scale=0.9]
\foreach \i in {0,...,5}{
	\node at (0,\i /2) {{\footnotesize $i=\i$}};
    \draw[line width=0.25mm] (1,\i /2)  -- (17,\i /2);
}

\foreach \j in {1,17}{
    		\draw[line width=0.25mm] (\j,0) circle (1pt);
}
\foreach \j in {1,9,17}{
    		\draw[line width=0.25mm] (\j,0.5) circle (1pt);
}
\foreach \j in {1,5,...,17}{
    		\draw[line width=0.25mm] (\j,1) circle (1pt);
}
\foreach \j in {1,3,...,17}{
    		\draw[line width=0.25mm] (\j,1.5) circle (1pt);
}
\foreach \j in {1,...,17}{
    		\draw[line width=0.25mm] (\j,2) circle (1pt);
}
\foreach \j in {1,1.5,...,17}{
    		\draw[line width=0.25mm] (\j,2.5) circle (1pt);
}
\draw[line width=0.5mm] (1,0) -- (17,0);
\draw[line width=0.5mm] (1,0.5) -- (9,0.5);
\draw[line width=0.5mm] (1,1) -- (9,1);
\draw[line width=0.5mm] (3,1.5) -- (9,1.5);
\draw[line width=0.5mm] (4,2) -- (6,2);
\draw[line width=0.5mm] (7,2) -- (9,2);
\draw[line width=0.5mm] (4.5,2.5) -- (6,2.5);
\draw[line width=0.5mm] (7.5,2.5) -- (9,2.5);
\draw[line width=0.5mm] (1,0) circle (2pt);
\draw[line width=0.5mm] (17,0) circle (2pt);
\draw[line width=0.5mm] (1,0.5) circle (2pt);
\draw[line width=0.5mm] (9,0.5) circle (2pt);
\draw[line width=0.5mm] (1,1) circle (2pt);
\draw[line width=0.5mm] (9,1) circle (2pt);
\draw[line width=0.5mm] (3,1.5) circle (2pt);
\draw[line width=0.5mm] (9,1.5) circle (2pt);
\draw[line width=0.5mm] (4,2) circle (2pt);
\draw[line width=0.5mm] (6,2) circle (2pt);
\draw[line width=0.5mm] (7,2) circle (2pt);
\draw[line width=0.5mm] (9,2) circle (2pt);
\draw[line width=0.5mm] (4.5,2.5) circle (2pt);
\draw[line width=0.5mm] (6,2.5) circle (2pt);
\draw[line width=0.5mm] (7.5,2.5) circle (2pt);
\draw[line width=0.5mm] (9,2.5) circle (2pt);
\end{tikzpicture}
\caption{Construction of the wrongly decoded error pattern. For $i \in [[0,5]]$, the error $\e_{i}$ is shown in the first horizontal line of the graph $\bT_i$.}
\label{fig:wrongly decoded error}
\end{figure} 

Precisely, the first steps of the construction are as follows. Start from the non-trivial cycle $\e_{0}$ that is located on the first horizontal line of the graph $\bT_{k}$. This corresponds to the first horizontal edge of the graph $\bT_{0}$. The first horizontal line of $\bT_{1}$ is composed of $2$ edges and we define $\e_{1}$ by the $1^{st}$ edge. Note that if we decode $\e_{1}$ in the graph $\bT_{1}$, then we obtain $\e_{0}$ since at step $2$, the decoder pairs the two endpoints of $\e_{1}$ together using the $2^{nd}$ edge. The first horizontal line of $\bT_{2}$ is composed of $4$ edges and we define $\e_{2}$ by the first two edges. Observe that if we decode $\e_{2}$ in the graph $\bT_{2}$, then we trivially obtain $\e_{1}$ by renormalisation. 

Let us now see how to construct the errors $\e_i$ for $i \geq 5$. Note that $\e_{4}$ is composed of $2$ paths, each of weight $2$. By repeating the same method that is used to obtain $\e_{3}$ from $\e_{2}$ to each path of $\e_{4}$, we obtain the error $\e_{5}$. This results in a pattern on the first horizontal line of $\bT_{5}$ with $2$ paths, each of weight $3$. Similarly, we obtain $\e_{6}$ by repeating the same procedure that is used to obtain $\e_{4}$ from $\e_{3}$ to each path of $\e_{5}$. This results in a pattern on the first horizontal line of $\bT_{6}$ with $4$ paths, each of weight $2$. We thus can iteratively construct the following error patterns $\e_{i}$ by transforming each path of weight $2$ to a path of weight $3$, and each path of weight $3$ to $2$ paths of weight $2$.  

Computing the weights of the constructed error patterns $\e_{k}$ for $k \in \N$ leads to the following statement.

\begin{proposition}
\label{prop:upper bound}
Consider $(u_{k})_{k \in \N}$ the sequence defined by $u_{k} = k$ for $k \in \lbrace 1,2,3 \rbrace$ and $u_{k}=2u_{k-2}$ for $k \geq 4$. For $k \in \N$, let $\omega_k$ be the largest integer such that any error pattern $\e_{k}$ in the graph $\bT_{k}$ of weight $\wt(\e_{k}) \leq \omega_k$ is decoded correctly. Then $\forall \, k \in \N, \ \omega_k \leq u_{k}-1$.  
\end{proposition}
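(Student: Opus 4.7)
The plan is to prove the proposition by explicitly constructing, for every $k \in \N$, an error pattern $\e_k$ of weight exactly $u_k$ that is wrongly decoded, which forces $\omega_k \leq u_k - 1$. The construction is the inductive one already sketched in the paragraphs preceding the statement, so the proof reduces to (i) making that construction precise and (ii) verifying that the decoder's reduction procedure really does turn $\e_k$ into $\e_{k-1}$ at every level.

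First I would set up notation: for every $k$, the support of $\e_k$ will lie entirely on the first horizontal line of $\bT_k$ (a homological generator), and will consist of $p_k$ length-$2$ subpaths and $q_k$ length-$3$ subpaths, properly aligned with the block structure. The base cases $k \in \{1,2,3\}$ are to be read directly off Figure~\ref{fig:wrongly decoded error} and checked by hand (a length-$1$ error on $\bT_1$ produces the non-trivial cycle $\e_0$ after step~2 of the reduction; length-$2$ and length-$3$ cases are similarly inspected). For the inductive step I introduce the local ``replacement rules'' $L_{2\to 3}$ and $L_{3\to 2}$: a length-$2$ path in $\bT_i$ is replaced, when we pass to $\bT_{i+1}$, by a length-$3$ path, and a length-$3$ path is replaced by two length-$2$ paths. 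Which of the two rules is applied depends only on the parity of $i$, so that the segments are always positioned at the right offset modulo the block boundaries.

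The core verification is local: for each replacement rule I would trace the syndrome at $\bT_{i+1}$ induced by the new path(s), then walk through the three steps of Algorithm~\ref{alg:reduction} on the cells containing those syndrome vertices and show that the output $\he_{i+1}$ contains exactly the ``missing'' edges that glue the new pattern back together, so that $\e_{i+1} + \he_{i+1} = \e_i$. Because the patterns live on one horizontal line, only $B$-cell instructions of step~2 and some $A$-cell instructions of step~3 can be triggered, which keeps the case analysis small. One delicate point is to choose the base coordinates so that the endpoints of each subpath fall on the intended vertex of their cell ($\alpha$ or $\beta$), otherwise a different rule would fire and destroy the recursion; I would fix this by an explicit choice of horizontal offsets, and record it in the induction hypothesis alongside the counts $p_k,q_k$.

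Finally I would do the weight count. From the replacement rules, $p_{k+1} = 2 q_k$ and $q_{k+1} = p_k$, so $\wt(\e_k) = 2 p_k + 3 q_k$ satisfies $\wt(\e_{k+2}) = 2\,\wt(\e_k)$. Combined with the base values $\wt(\e_1)=1$, $\wt(\e_2)=2$, $\wt(\e_3)=3$, this gives $\wt(\e_k) = u_k$. Since $\e_k$ reduces after $k$ rounds to the homologically non-trivial cycle $\e_0$, the decoder fails on $\e_k$, which proves $\omega_k \leq u_k - 1$. The main obstacle will really be the bookkeeping in the local verification, in particular keeping the alternating rules $L_{2\to 3}$ and $L_{3\to 2}$ in sync with block offsets; once that is pinned down, the weight recursion and the conclusion are immediate.
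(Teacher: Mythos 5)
Your proposal follows essentially the same route as the paper: it uses the same fractal construction on the first horizontal line (the same replacement rules, weight-$2$ paths becoming weight-$3$ paths and weight-$3$ paths becoming two weight-$2$ paths), verifies as the paper does that only the step-$2$ $B$-cell and step-$3$ $A$-cell instructions fire so that each $\e_{i+1}$ reduces to $\e_i$, and concludes with the same weight recursion $\wt(\e_{k+2})=2\wt(\e_k)$ giving $\wt(\e_k)=u_k$ and hence $\omega_k\leq u_k-1$. Your bookkeeping via the counts $p_k,q_k$ is just a slightly more explicit version of the paper's observation that at each level all paths have the same length, so the two arguments coincide in substance.
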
 

\begin{proof}
We compute the weight of the errors $\e_{k}$ constructed above and show by induction that $\forall \, k \in \N$, $\wt(\e_{k}) = u_{k}$. Since the errors $\e_{k}$ are wrongly decoded, it will follow that $\forall \, k \in \N$, $\omega_k < u_{k}$. 

The equality $\wt(\e_{k}) = u_{k}$ is immediately verified for $k \in \lbrace 1,2,3 \rbrace$. Now consider the constructed error $\e_k$ for $k \geq 4$. Observe that either $\e_{k-2}$ is composed only of paths of weight $2$ or $\e_{k-2}$ is composed only of paths of weight $3$. In the first case, we have that by construction every path is transformed into a path of weight $3$, which is in turn transformed into $2$ paths of weight $2$. Hence, $\e_{k}$ has exactly twice as much paths of weight $2$ as $\e_{k-2}$. In the second case, we observe similarly that $\e_{k}$ has exactly twice as many paths of weight $3$ as $\e_{k-2}$. In both cases $\wt(\e_{k}) = 2\wt(\e_{k-2})$ and so we can conclude that $\wt(\e_{k}) = 2\wt(\e_{k-2}) = 2u_{k-2} = u_{k}$.
\end{proof}

We observe that the sequence $(u_{k})_{k \in \N}$ scales as $(\sqrt{2^k})_{k \in \N}$ since for every $k  \geq4, \; u_{k} = 2u_{k-2}$. Given that $\sqrt{2^k} = d^{1/2}$, the error-correcting radius $\omega$ is bounded from above by a quantity that scales as $d^{1/2}$. 

\subsection{\texorpdfstring{$1$}{}-dimensional study of the error-correcting radius}
\label{subsec:1-dimensional study}

Finding a lower bound for the error-correcting radius $\omega$ is not as straightforward as it might initially seem. Let us therefore start by considering the $1$-dimensional case where we only allow errors to be placed on a single line of the toric graphs. 

In this study, cells are reduced to single edges and blocks to two neighbouring edges. Moreover, the renormalisation decoder is simplified and we only consider two steps in each reduction stage. The precisely reduction procedure in the $1$-dimensional case is as follows. We first go through all the blocks of the graph and in each block, if the end vertices of the right edge of the block are in the error syndrome, then we pair them up with this right edge (Figure~\ref{fig:instruction B,C}). We then go through all the blocks of the graph again and in each block, if the middle vertex of the block is in the error syndrome, then we link it to the left vertex of the block with the left edge of that block (Figure~\ref{fig:instruction A}). 

If we restrict ourselves to this $1$-dimensional case, then one can determine the exact value of the error-correcting radius $\omega$. 

\begin{proposition}
\label{prop:lower bound 1D}
Consider $(u_{k})_{k \in \N}$ the sequence defined by $u_{k} = k$ for $k \in \lbrace 1,2,3 \rbrace$ and $u_{k}=2u_{k-2}$ for $k \geq 4$. For $k \in \N$, let $\omega_k$ be the largest integer such that any \mbox{1-dimensional} error pattern $\e_{k}$ in the graph $\bT_{k}$ of weight $\wt(\e_{k}) \leq \omega_k$ is decoded correctly. Then $\forall k \in \N, \ \omega_k = u_{k}-1$. 
\end{proposition}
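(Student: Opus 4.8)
The upper bound is already in hand: the pattern constructed for Proposition~\ref{prop:upper bound} lives on a single line, so it is a legitimate $1$-dimensional wrongly-decoded error of weight $u_k$, giving $\omega_k \le u_k - 1$. Hence the whole content of the statement is the matching lower bound $\omega_k \ge u_k - 1$, i.e. that \emph{every} wrongly-decoded $1$-dimensional pattern has weight at least $u_k$. The plan is first to put the $1$-dimensional reduction into a transparent form. Identifying $\e_i$ on the line with a cyclic vector $x \in \{0,1\}^{2^i}$ of edge-indicators, a short case analysis of the two block-rules (recalling that the syndrome is updated only between the two steps) shows that one reduction sends $x$ to $y \in \{0,1\}^{2^{i-1}}$ with $y_j = \mathrm{maj}(x_{2j}, x_{2j+1}, x_{2j+2})$, the majority of the two edges of block $j$ and the left edge of block $j+1$. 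Iterating, the decoder computes $R^{(k)}$ of the initial vector, where $R$ is this majority-downsampling map, and the error is wrongly decoded precisely when the single surviving bit $\e_0$ equals $1$.

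I would then prove the lower bound by strong induction on $k$ in steps of two, matching the recurrence $u_k = 2u_{k-2}$. The base cases $k \in \{1,2,3\}$ (which also absorb the wrap-around/homological subtlety that makes $\e_0 = 1$ mean ``non-trivial cycle'') are finite verifications. For the inductive step $k \ge 4$, set $\e_{k-2} = R^{(2)}(\e_k)$. Since $R^{(k-2)}(\e_{k-2}) = 1$, the pattern $\e_{k-2}$ is itself wrongly decoded at level $k-2$, so the induction hypothesis gives $\wt(\e_{k-2}) \ge u_{k-2}$. It therefore suffices to prove that two reductions at least double the weight, $\wt(\e_k) \ge 2\,\wt(\e_{k-2})$, after which $\wt(\e_k) \ge 2u_{k-2} = u_k$.

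The main obstacle is exactly this two-step doubling. I would attack it run by run. A single reduction step is easy to control: a maximal run of $\ell$ consecutive $1$'s in $y$ imposes $\ell$ majority constraints, each requiring at least two $1$'s among three inputs that are shared only with the neighbouring constraint, and a short covering argument shows any preimage must carry weight $\ge \ell + 1$. The key lemma is the two-step refinement: any two-step preimage of an isolated run of length $\ell$ has weight at least $2\ell$, with equality realised exactly by the construction's transfer rules (a run of length $2$ lifts to one of length $3$, a run of length $3$ lifts to two runs of length $2$). Summing this bound over the disjoint runs of $\e_{k-2}$ yields $\wt(\e_k) \ge 2\sum_i \ell_i = 2\,\wt(\e_{k-2})$.

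I expect the delicate points to be the interaction between neighbouring runs and the dependence on alignment. A run's image under $R$ has length $\lceil \ell/2 \rceil$ or $\lfloor \ell/2 \rfloor$ according to the parity of its endpoints, so the two-step minimisation must range over both the alignment and the intermediate configuration; and when the runs of $\e_{k-2}$ are close, their preimages could a priori share edges and undercut the per-run bound, so one must either argue that a minimal wrongly-decoded pattern can be taken with its runs separated enough for the two-step preimages to be edge-disjoint, or package weight and number-of-runs into a single potential that is monotone under $R$. It is worth stressing that the integrality is essential here: relaxing $\mathrm{maj}(a,b,c) \le (a+b+c)/2$ and counting coverage only yields a factor $3/2$ (indeed $4/3$ over two steps), well short of the required factor $2$, so the argument must genuinely exploit the run-and-gap structure rather than a linear relaxation.
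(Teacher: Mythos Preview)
Your overall architecture---upper bound from Proposition~\ref{prop:upper bound}, lower bound by strong induction in steps of two, base cases $k\in\{1,2,3\}$ by exhaustion, and the key inequality $\wt(\e_k)\ge 2\,\wt(\e_{k-2})$---is exactly the paper's. Your majority-downsampling reformulation $y_j=\mathrm{maj}(x_{2j},x_{2j+1},x_{2j+2})$ is correct and clean; the paper does not state it, working instead directly with the block rules.

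The substantive difference is in how the doubling $\wt(\e_{i+2})\ge 2\,\wt(\e_i)$ is obtained. Your primary route---prove a per-run two-step bound ``any two-step preimage of an isolated run of length $\ell$ has weight $\ge 2\ell$'' and sum over the runs of $\e_{i}$---hits precisely the obstacle you flag: over \emph{two} steps the preimages of neighbouring runs of $\e_i$ need not be edge-disjoint, so the sum is not justified without further work. The paper avoids this entirely by taking your second alternative, the potential $\wt+P$ (with $P$ the number of runs), but splits it into two \emph{one-step} inequalities rather than anything two-step:
\[
\wt(\e_{i+1}) \ge \wt(\e_i) + P_i \qquad\text{and}\qquad \wt(\e_{i+1}) + P_{i+1} \ge 2\,\wt(\e_i).
\]
Each of these is proved per-run, which is legitimate at a single step because two runs of $\e_i$ are separated by at least two edges of $\bT_{i+1}$, so a preimage edge contributes to at most one of them. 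Chaining gives $\wt(\e_{i+2}) \ge \wt(\e_{i+1}) + P_{i+1} \ge 2\,\wt(\e_i)$ globally, with no need to control two-step preimages run by run. This is the missing piece that turns your outline into a proof; your observation that integrality is essential is spot on, and it is exactly what the second inequality encodes (each removed edge from the doubled run costs one in weight but creates one new run).
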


By Proposition~\ref{prop:upper bound}, $\forall \, k \in \N, \ \omega_k \leq u_{k}-1$ so it remains to show that we have \mbox{$\forall \, k \in \N$}, $\omega_k \geq u_{k}-1$. To achieve this lower bound on the error-correcting radius, recall the notion of \emph{preimage} of an error pattern $\e_{i}$ on the graph $\bT_i$. A preimage of $\e_{i}$ is an error pattern $\e_{i+1}$ on the graph $\bT_{i+1}$ such that one obtains $\e_{i}$ from $\e_{i+1}$ after one reduction stage of the decoding process. The main idea is then to show that the weight of any $1$-dimensional second preimage of a given $1$-dimensional error is twice as large as the weight of the error itself. This is proved in the following three lemmas.

\begin{lemma}
\label{lem:growth 1D, 1}
Let $i \in [[0,k-1]]$ and let $\e_{i}$ be a 1-dimensional error pattern on the graph~$\bT_{i}$. Let $P_{i}$ denote the number of paths of $\e_i$ and let $\e_{i+1}$ be a 1-dimensional preimage of $\e_{i}$. Then $\wt(\e_{i+1}) \geq \wt(\e_{i}) + P_{i}$. 
\end{lemma}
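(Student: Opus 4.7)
My plan is to compare $\wt(\e_{i+1})$ and $\wt(\e_i)$ block by block in $\bT_{i+1}$. Each block of $\bT_{i+1}$ corresponds to a single edge of $\bT_i$, so letting $a_j$ and $b_j$ denote the weights of $\e_{i+1}$ and of $\e_i$ restricted to block $j$ (hence $a_j \in \{0,1,2\}$ and $b_j \in \{0,1\}$), one has $\wt(\e_{i+1}) - \wt(\e_i) = \sum_j (a_j - b_j)$. The $1$-dimensional reduction rules are purely local: the right edge of block $j$ lies in $\he_{i+1}$ iff both its middle vertex and its right vertex are in the $\e_{i+1}$-syndrome, and the left edge iff the middle vertex is in the syndrome while the right vertex is not. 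Since on each block the ``lift'' of $\e_i$ is either $(0,0)$ or $(1,1)$, a short case analysis gives $a_j - b_j \in \{0,1\}$ and classifies the blocks contributing $a_j-b_j=1$ (call them \emph{heavy}) into two types: (T) a weight-$2$ block, necessarily with $b_j=1$; and (O) a weight-$1$ block in which the unique $\e_{i+1}$-edge and the unique $\he_{i+1}$-edge coincide, which forces $b_j=0$. The inequality $\wt(\e_{i+1}) \geq \wt(\e_i) + P_i$ is thus equivalent to exhibiting at least $P_i$ heavy blocks.

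The key local observation I will prove is: on any weight-$1$ block $j$ which is \emph{not} heavy (i.e.\ one where the $\e_{i+1}$-edge and the $\he_{i+1}$-edge are distinct), the constraint on the $\e_{i+1}$-syndrome at the right vertex of the block that is required for $\he_{i+1}$ to take this shape forces the left edge of block $j+1$ to lie in $\e_{i+1}$. This is immediate by inspecting the two sub-configurations and using the formula for $\he_{i+1}$. In particular block $j+1$ cannot be empty, and if $b_{j+1}=0$, meaning we are exiting an $\e_i$-path, then the only remaining possibility is that block $j+1$ be of type (O).

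With this in hand I will build an injection from the $P_i$ maximal runs of $\{j : b_j=1\}$ into the set of heavy blocks: to each run, assign a type-(T) block lying inside it if one exists, and otherwise assign the block immediately following the run, which by the previous observation must be of type (O). Injectivity is straightforward since two different runs have different ``immediately after'' positions, and a type-(T) assignment lies in its own run while a type-(O) assignment lies strictly outside every run. Summing over the $P_i$ runs yields $P_i$ heavy blocks, with the degenerate cases $P_i=0$ (when $\e_i$ is empty or fills the whole cycle) being trivial. The main subtlety is the case analysis underlying the local constraint on the syndrome at the block's right vertex, but once the four possible block configurations of $(\e_{i+1}, \he_{i+1})$ are tabulated, the bookkeeping is short.
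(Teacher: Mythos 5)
Your proposal is correct, and it reaches the bound by a genuinely different route than the paper. The paper first notes that every block carrying an edge of $\e_{i}$ must contain at least one edge of $\e_{i+1}$ (giving $\wt(\e_{i+1})\geq\wt(\e_{i})$), then partitions the edges of $\e_{i+1}$ among the paths of $\e_{i}$ so as to assume without loss of generality that $\e_{i}$ is a single path, and finally shows by contradiction that the extremal configuration (exactly one edge in each such block and nothing elsewhere) is not a preimage, by examining the block containing the rightmost edge of the path. You instead tabulate the deterministic $1$-dimensional rules block by block — your characterisation of when the right, respectively left, edge of a block enters $\he_{i+1}$ is exactly the two-step schedule with the intermediate syndrome update, and the cross-block independence you implicitly rely on does hold, since the second step only ever tests a block's own middle vertex — obtaining $a_j-b_j\in\{0,1\}$ with the (T)/(O) classification, and then charge each maximal run of $\{j:b_j=1\}$ either to a type-(T) block inside it or, via your local propagation observation applied to the run's last block, to the type-(O) block immediately after it; injectivity gives at least $P_i$ heavy blocks. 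The underlying insight is the same in both arguments (the right end of each path of $\e_{i}$ forces surplus weight in $\e_{i+1}$), but your per-block accounting and injection avoid the reduction to a single path and the by-contradiction treatment of a single extremal configuration, handle the cycle case ($P_i=0$) uniformly through $a_j-b_j\geq 0$, and make explicit where each surplus edge sits, at the cost of a longer case tabulation; the paper's version is shorter but leans on the separation/partition step and on ruling out only the minimal configuration.
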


\begin{proof}
Observe that in the $1$-dimensional case, during the reduction process, the decoder modifies the edges of a block of $\bT_{i+1}$ only if the central vertex of that block is in the error syndrome. This happens if exactly one of the two edges of that block is an error edge, i.e. an edge of $\e_{i+1}$. It follows that if no error edge is present in a given block of $\bT_{i+1}$, then after one reduction stage the corresponding edge of $\bT_{i}$ is not an error edge, i.e. not an edge of $\e_i$. Since $\e_{i+1}$ is a preimage of $\e_{i}$, we thus need at least one error edge of $\e_{i+1}$ in each block defined by an error edge of $\e_{i}$. Hence, we have that $\wt(\e_{i+1}) \geq \wt(\e_{i})$. 

Note that in the $1$-dimensional study, an error pattern $\e_i$ is either a non-trivial cycle or a disjoint union of paths. If $\e_i$ is a non-trivial cycle, then $P_i = 0$ and so the desired inequality is satisfied since we just showed that $\wt(\e_{i+1}) \geq \wt(\e_{i})$. Let us now suppose that $\e_i$ is a disjoint union of paths. If $\e_{i}$ consists of more than one path, then two distinct paths are separated by at least two edges in $\bT_{i+1}$ and an error edge of $\e_{i+1}$ located in between the two paths contributes to at most one of them. It follows that we can partition the error edges of $\e_{i+1}$ according to which of the paths of $\e_{i}$ they contribute to. We may thus without loss of generality assume that $\e_{i}$ consists of a single path and prove that $\wt(\e_{i+1}) \geq \wt(\e_{i}) + 1$.

Note that if we put exactly one error edge of $\e_{i+1}$ in each block defined by an error edge of $\e_{i}$ and no error edges elsewhere, then $\e_{i+1}$ is not a preimage of $\e_{i}$. Indeed, consider the error edge of $\e_{i}$ that has no error edge to its right. Note this edge exists and is unique since $\e_{i}$ is not a cycle by assumption. Suppose that the corresponding block of the graph $\bT_{i+1}$ only contains one error edge of $\e_{i+1}$. If this is the right edge of the block, then it is removed since the edge is isolated, i.e. there is no error edge to its left or to its right. There is no error edge to its left because we supposed that there is only error edge in the block and there is no error edge to its right because we supposed that there are no error edges outside the blocks defined by the error edges of $\e_{i}$. If the error edge of $\e_{i+1}$ is the left edge of the block, then it is also removed since its right vertex is in the error syndrome and has not been paired up with an other vertex. This is because we supposed that it is the only error edge in the block and that there are no error edges outside the blocks defined by the error edges of $\e_{i}$. Hence, we cannot put only one error edge of $\e_{i+1}$ in each block defined by an error edge of $\e_{i}$ and no error edges elsewhere. This means that we need at least one error edge more and so $\wt(\e_{i+1}) \geq \wt(\e_{i}) + 1$. 
\end{proof}

\begin{lemma}
\label{lem:growth 1D, 2}
Let $i \in [[0,k-1]]$ and let $\e_{i}$ be a 1-dimensional error pattern on the graph~$\bT_{i}$. Let $\e_{i+1}$ be a 1-dimensional preimage of $\e_{i}$ and let $P_{i+1}$ denote the number of its paths. Then $\wt(\e_{i+1}) + P_{i+1} \geq 2\wt(\e_{i})$. 
\end{lemma}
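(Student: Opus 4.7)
The plan is to reduce the inequality to a combinatorial statement about binary strings, and then to dispatch it by a parity-based slack argument.

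First, I would trace through the two steps of the $1$-dimensional reduction procedure to obtain an explicit local characterisation of $\e_i$ in terms of $\e_{i+1}$. Labelling the left and right edges of each block of $\bT_{i+1}$ by bits $a_j, b_j$ of $\e_{i+1}$ respectively, and writing $b_j'$ for the effective right-edge bit of the block in $\e_{i+1} + \he_{i+1}$ after both reduction steps, a direct computation gives $b_j' = a_j$ when $a_j = a_{j+1}$ and $b_j' = b_j$ otherwise. Since step~$2$ forces the two edges of each block to carry the same bit, the renormalised edge $E_j$ lies in $\e_i$ iff $b_j' = 1$, which is equivalent to requiring that the window $(a_j, b_j, a_{j+1})$ contain at least two $1$s.

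Writing $\e_{i+1}$ as the binary string $s$ with $s_{2j-1} = a_j$ and $s_{2j} = b_j$, the inequality to prove becomes $W + P \geq 2W'$, where $W$ is the number of $1$s in $s$, $P$ the number of maximal runs of $1$s, and $W'$ the number of indices $j$ such that the window $s_{2j-1} s_{2j} s_{2j+1}$ has weight at least~$2$. A small case analysis shows that each such window either has at least two of its positions inside a single maximal run of $s$ (an \emph{isolated} contribution of that run), or matches the pattern $101$ straddling two runs separated by exactly one $0$ (an \emph{interaction} between the two runs). These two categories are disjoint and exhaust all $E_j = 1$ events, so $W' = \sum_R c_R + I$, where $c_R$ is the isolated contribution of run $R$ and $I$ the total number of interactions.

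A short parity analysis on the starting position of a run of length $\ell$ gives $c_R \leq (\ell+1)/2$, which sums to $\sum_R c_R \leq (W+P)/2$. The crux of the argument is then to absorb $I$ into the total \emph{slack} $\sum_R \big((\ell_R+1)/2 - c_R\big)$. The same parity table shows that a run's slack equals $0$, $\tfrac{1}{2}$, or $1$ according to the parities of its start position and length, and that a run can be adjacent to an interaction on its right (resp.\ left) only when it ends (resp.\ starts) at an odd position. Splitting each interaction's unit cost as $\tfrac{1}{2}$ to each of its two neighbouring runs and then checking every parity configuration shows that each run's allotted payment is at most its slack; notably, the only runs with zero slack both start and end at even positions and hence participate in no interaction. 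Summing gives $\sum_R \mathrm{slack}_R \geq I$, whence $W' \leq (W+P)/2$ as required.

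The hard part will be the slack-versus-interaction bookkeeping of the last step: both quantities are controlled by parity, a run may sit between two interactions, and its slack must be shared carefully across both sides to avoid double counting.
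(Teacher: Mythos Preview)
Your argument is correct and takes a genuinely different route from the paper's.

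The paper argues in the reverse direction: fixing $\e_i$, it first reduces to the case where every edge of $\e_{i+1}$ lies over some error edge of $\e_i$, then starts from the ``maximal'' candidate preimage (both edges of every such block set to $1$, so that $\wt(\e_{i+1})=2\wt(\e_i)$ initially) and tracks how the potential $\wt(\e_{i+1})+P_{i+1}$ changes as edges are removed. A case analysis (cycle vs.\ single path, inner vs.\ outer edge) shows that each admissible removal drops the weight by one while creating a new path, except for one outermost edge per path of $\e_i$, and that single unit of loss is exactly absorbed by the initial ``$+1$'' from the path count.

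You go forward instead: you derive the closed-form local rule $E_j=1\iff \wt(s_{2j-1},s_{2j},s_{2j+1})\geq 2$ and then prove $W+P\geq 2W'$ by a direct charging scheme on the runs of $\e_{i+1}$. This has two payoffs. First, the window characterisation is a clean structural fact about the $1$-dimensional reduction that the paper never makes explicit. Second, the parity/slack bookkeeping pinpoints exactly why the inequality is tight: runs with both endpoints at even positions have zero slack but are automatically barred from any $101$ interaction, while every other parity profile carries just enough slack to pay for the interactions it can touch. The paper's extremal-configuration approach is perhaps more intuitive but requires the preliminary clean-up step (pushing stray edges of $\e_{i+1}$ into blocks over $\e_i$) that your formulation sidesteps entirely.

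One small boundary case to flag: when $\e_{i+1}$ is the full cycle, the paper's $P_{i+1}$ equals $0$, whereas your ``number of maximal runs of $1$s'' reads $1$. Your inequality $\sum_R c_R + I \leq (W+\#\text{runs})/2$ is then off by $1/2$ from what is needed. The fix is immediate (in that case $\e_i$ is also the full cycle and $\wt(\e_{i+1})=2\wt(\e_i)$ with $P_{i+1}=0$, so equality holds), but it should be stated.
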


\begin{proof}
Note that an error edge of $\e_{i+1}$ that is not in a block defined by an error edge of $\e_{i}$ disappears after the reduction process. If such an error edge doesn't contribute to the error $\e_i$, then it can simply be removed to decrease both $\wt(\e_{i+1})$ and $P_{i+1}$. Similarly, if such an error edge does contribute to $\e_i$, then it can be moved inside a block defined by an error edge of $\e_{i}$ to decrease $P_{i+1}$. We may thus assume that every error edge of $\e_{i+1}$ is in a block defined by an error edge of $\e_{i}$. 

To prove the desired inequality, let us construct $\e_{i+1}$ from $\e_i$ and try to minimise $\wt(\e_{i+1}) + P_{i+1}$. We start by supposing that both edges of each block defined by an error edge of $\e_{i}$ are error edges of $\e_{i+1}$.  Then $\wt(\e_{i+1}) + P_{i+1} \geq \wt(\e_{i+1}) = 2\wt(\e_{i})$. We now consider the two following cases where either the error pattern $\e_i$ is a non-trivial cycle or a disjoint union of paths, and see what happens if we start removing error edges from~$\e_{i+1}$.  

Case $1$: $\e_i$ is a non-trivial cycle. Since $\e_i$ is a non-trivial cycle, $\e_{i+1}$ is also a non-trivial cycle at the start of our construction. We thus have $\wt(\e_{i+1}) + P_{i+1} = \wt(\e_{i+1}) = 2\wt(\e_{i})$. Let us now see what happens if we start removing error edges from $\e_{i+1}$. First, remove any error edge of $\e_{i+1}$. Then the weight of $\e_{i+1}$ is reduced by one but the number of paths $P_{i+1}$ is increased by one since $\e_{i+1}$ is now a path. Hence, the sum $\wt(\e_{i+1}) + P_{i+1}$ is unchanged. Note that we cannot remove the rightmost or leftmost error edge of the new error $\e_{i+1}$, i.e. the error edges neighbouring the error edge that was just removed. Indeed, if the removed edge was the left edge of a block, then we cannot remove the edge to its right since there needs to be at least one error edge in the given block. We can also not remove the edge to its left since $\e_{i+1}$ would no longer be a preimage of $\e_{i}$. Similarly, if the removed edge was the right edge of a block, then we cannot remove its neighbouring edges for the same reasons. Consider now removing an error edge of $\e_{i+1}$ that is not an outer edge, i.e. an error edge that has an error edge to its left and to its right. If we remove one of these edges, then the weight of $\e_{i+1}$ is reduced by one but the number of paths $P_{i+1}$ is increased by one. Hence, the sum $\wt(\e_{i+1}) + P_{i+1}$ is again unchanged. We thus have $\wt(\e_{i+1}) + P_{i+1} = 2\wt(\e_{i})$. 

Case $2$: $\e_i$ is a disjoint union of paths. As we assumed that every error edge of $\e_{i+1}$ is in a block defined by an error edge of $\e_{i}$, we can partition the error edges of $\e_{i+1}$ according to which of the paths of $\e_{i}$ they contribute to. We may thus without loss of generality assume that $\e_{i}$ consists of a single path. Then $\e_{i+1}$ consists also of a single path at the start of our construction and $\wt(\e_{i+1}) + P_{i+1} = 2\wt(\e_{i})+1$. Let us now see what happens if we start removing error edges from $\e_{i+1}$. First, consider the rightmost error edge of $\e_{i+1}$, i.e. the one that has no error edge to its right. Note that this error edge exists and is unique since $\e_{i+1}$ is not a cycle by assumption. We cannot remove this error edge since otherwise $\e_{i+1}$ is no longer a preimage of $\e_{i}$. Consider now the leftmost error edge of $\e_{i+1}$, i.e. the one that has no error edge to its left. If we remove this error edge, then the weight of $\e_{i+1}$ is reduced by one and the number of paths $P_{i+1}$ is unchanged. Hence, we have that $\wt(\e_{i+1}) + P_{i+1} = 2\wt(\e_{i})$. We cannot remove the leftmost error edge of our new $\e_{i+1}$ since recall that we need at least one error edge of $\e_{i+1}$ in each block defined by an error edge of $\e_i$. Finally, consider an error edge of $\e_{i+1}$ that is not an outer edge, i.e. an edge that has an error edge to its left and to its right. If we remove one of these edges, then the weight of $\e_{i+1}$ is reduced by one but the number of paths $P_{i+1}$ is increased by one. Hence, the sum $\wt(\e_{i+1}) + P_{i+1}$ is unchanged. We thus have $\wt(\e_{i+1}) + P_{i+1} \geq 2\wt(\e_{i})$. 
\end{proof}

\begin{lemma}
\label{lem:growth 1D, 3}
Let $i \in [[0,k-2]]$ and let $\e_{i}$ be a 1-dimensional error pattern on the graph~$\bT_{i}$. Let $\e_{i+2}$ be a 1-dimensional second preimage of the error $\e_{i}$. Then~$\wt(\e_{i+2}) \geq 2 \wt(\e_{i})$. 
\end{lemma}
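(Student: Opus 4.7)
The plan is to prove this by directly chaining the two previous lemmas, treating $\e_{i+1}$ as the intermediate preimage between $\e_i$ and $\e_{i+2}$. The key observation is that Lemma~\ref{lem:growth 1D, 2} produces a lower bound that contains the term $P_{i+1}$, which is precisely the quantity that Lemma~\ref{lem:growth 1D, 1} knows how to exploit when we move one step further up the tower.

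More precisely, since $\e_{i+2}$ is a second preimage of $\e_i$, there exists a 1-dimensional error pattern $\e_{i+1}$ on $\bT_{i+1}$ that is a preimage of $\e_i$ and such that $\e_{i+2}$ is a preimage of $\e_{i+1}$. First I would apply Lemma~\ref{lem:growth 1D, 2} to the pair $(\e_i, \e_{i+1})$ to obtain
\[
\wt(\e_{i+1}) + P_{i+1} \;\geq\; 2\,\wt(\e_i),
\]
where $P_{i+1}$ denotes the number of paths of $\e_{i+1}$. Then I would apply Lemma~\ref{lem:growth 1D, 1} to the pair $(\e_{i+1}, \e_{i+2})$, which gives
\[
\wt(\e_{i+2}) \;\geq\; \wt(\e_{i+1}) + P_{i+1}.
\]
Combining these two inequalities immediately yields $\wt(\e_{i+2}) \geq 2\,\wt(\e_i)$, which is the claim.

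There is essentially no obstacle here: the content of the proposition is already packed into Lemmas~\ref{lem:growth 1D, 1} and~\ref{lem:growth 1D, 2}, and the statement of Lemma~\ref{lem:growth 1D, 2} was crafted precisely so that its ``excess'' term $P_{i+1}$ would be absorbed by the next application of Lemma~\ref{lem:growth 1D, 1}. The only thing worth double-checking is that the ranges of $i$ match: Lemma~\ref{lem:growth 1D, 1} requires $i \in [[0,k-1]]$ and Lemma~\ref{lem:growth 1D, 2} requires $i \in [[0,k-1]]$, so to apply the latter at level $i$ and the former at level $i+1$ we need $i+1 \leq k-1$, i.e. $i \in [[0,k-2]]$, which matches the hypothesis of the present lemma exactly.
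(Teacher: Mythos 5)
Your proposal is correct and is essentially identical to the paper's proof: both take the intermediate pattern $\e_{i+1}$ (the image of $\e_{i+2}$ after one reduction stage), apply Lemma~\ref{lem:growth 1D, 1} to the pair $(\e_{i+1},\e_{i+2})$ and Lemma~\ref{lem:growth 1D, 2} to the pair $(\e_i,\e_{i+1})$, and chain the two inequalities. Nothing is missing.
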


\begin{proof}
Consider the image of the error $\e_{i+2}$ after one reduction stage. This image is a $1$-dimensional error pattern which we denote by $\e_{i+1}$. Then $\e_{i+2}$ is clearly a preimage of $\e_{i+1}$ so by Lemma~\ref{lem:growth 1D, 1}, $\wt(\e_{i+2}) \geq \wt(\e_{i+1}) + P_{i+1}$. Now $\e_{i+1}$ is also a preimage of $\e_{i}$ so by Lemma~\ref{lem:growth 1D, 2}, $\wt(\e_{i+1}) + P_{i+1} \geq 2\wt(\e_{i})$. Combining the two inequalities, we obtain $\wt(\e_{i+2}) \geq 2 \wt(\e_{i})$. 
\end{proof}

We now are able to prove Proposition~\ref{prop:lower bound 1D} that states that $\forall k \in \N, \ \omega_k = u_{k}-1$, where $(u_{k})_{k \in \N}$ is the sequence defined by $u_{k} = k$ for $k \in \lbrace 1,2,3 \rbrace$ and $u_{k}=2u_{k-2}$ for $k \geq 4$. 
 
\begin{proof}[Proof of Proposition~\ref{prop:lower bound 1D}]
By Proposition~\ref{prop:upper bound}, $\forall \, k \in \N, \ \omega_k \leq u_{k}-1$ so it remains to show that we have $\forall \, k \in \N,\ \omega_k \geq u_{k}-1$. We thus~need to show that if a $1$-dimensional error pattern $\e_{k}$ in the graph $\bT_{k}$ has weight $\wt(\e_{k}) < u_{k}$, then $\e_{k}$ is decoded correctly. Equivalently, consider a $1$-dimensional error pattern $\e_{k}$ in the graph $\bT_{k}$ that is decoded incorrectly and let us show by strong induction that $\wt(\e_{k}) \geq u_{k}$. 

By exhausting all the possible $1$-dimensional patterns, one sees that for \mbox{$k \in \lbrace 1,2,3 \rbrace$} the minimal weight of an error pattern in $\bT_{k}$ that is decoded incorrectly is equal to $u_{k}$. Hence, for every $k \in \lbrace 1,2,3 \rbrace$, $\wt(\e_{k}) \geq u_{k}$. For $k \geq 4$, consider a wrongly decoded $1$-dimensional error $\e_k$ and the $1$-dimensional pattern $\e_{k-2}$ obtained after two reduction stages. Since $\e_{k-2}$ is also a wrongly decoded $1$-dimensional error pattern,  $\wt(\e_{k-2}) \geq u_{k-2}$. By Lemma~\ref{lem:growth 1D, 3}, we thus have \mbox{$\wt(\e_{k}) \geq 2 \wt(\e_{k-2}) \geq 2u_{k-2}= u_{k}$}.
\end{proof}

\subsection{Lower bound on the error-correcting radius}
\label{subsec:lower bound}

In the $2$-dimensional case, we are not able to determine the exact value of the error-correcting radius but we prove the following lower bound which scales as $\frac{5}{6} d^{\,\log_2 \! \frac{6}{5}} \sim \frac{5}{6} d^{\, 0.263}$.

\begin{theorem}
\label{thm:lower bound}
Consider $(v_{k})_{k \in \N}$ the sequence defined by $v_k=\left( \frac{6}{5} \right)^{k-1}$. For $k \in \N$, let $\omega_k$ be the largest integer such that any error  $\e_{k}$ of weight $\wt(\e_{k}) \leq \omega_k$ is decoded correctly. Then~$\forall k \in \N, \ \omega_k \geq v_{k}-1$. 
\end{theorem}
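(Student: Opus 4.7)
The plan is to argue by induction on $k$, in direct analogy with the one-dimensional case (Proposition~\ref{prop:lower bound 1D}). I would prove the contrapositive statement: every wrongly decoded error pattern $\e_k$ satisfies $\wt(\e_k) \geq v_k$. The base case $k=1$ is trivial, since a wrongly decoded $\e_1$ is a non-trivial cycle on $\bT_1$ and therefore $\wt(\e_1) \geq 1 = v_1$. For the inductive step, the image $\e_{k-1}$ of a wrongly decoded $\e_k$ after one reduction stage is itself wrongly decoded — the subsequent decoder iterations act exactly on $\e_{k-1}$ and still produce the same non-trivial final cycle $\e_0$ — so by the inductive hypothesis $\wt(\e_{k-1}) \geq v_{k-1} = \tfrac{5}{6} v_k$. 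It thus suffices to establish a single-step growth lemma.

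The key technical statement is: for every $i \in [[0,k-1]]$ and every preimage $\e_{i+1}$ of a nonzero pattern $\e_i$ under one reduction stage,
\[
\wt(\e_{i+1}) \;\geq\; \tfrac{6}{5}\, \wt(\e_i).
\]
To prove this, I would analyse the reduction block by block. A cell of $\bT_i$ corresponds to a block of $\bT_{i+1}$, whose boundary edges are the length-$2$ paths renormalising to the four edges of that cell. For each possible local shape that $\e_i$ can take within a given cell (up to the symmetries of the block and modulo addition of locally trivial pieces), I would determine the minimum number of error edges that $\e_{i+1}$ must contain inside the associated block in order to produce this local shape under the three-step rules of Algorithm~\ref{alg:reduction}. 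The constant $6/5$ should emerge as the worst local ratio — typically, a configuration in which five edges of $\e_i$ can only be realised by at least six edges of $\e_{i+1}$.

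The main obstacle is that the reduction is not strictly local: syndrome vertices on the boundary of a block can arise from error edges of $\e_{i+1}$ lying in adjacent blocks, and the decisions taken in steps~$2$ and~$3$ depend on syndrome updates produced by the earlier steps, which propagate across block boundaries. A naive per-block inequality will therefore not suffice, and amortised bookkeeping will be required. Two natural strategies suggest themselves, both echoing the one-dimensional argument: (i)~define a canonical assignment of each error edge of $\e_{i+1}$ to a unique block of $\bT_{i+1}$, chosen so that the local charges balance and the $6/5$ ratio can be checked blockwise; or (ii)~strengthen the inductive invariant to a potential of the form $\wt(\e_i) + c\cdot X(\e_i)$, where $X$ counts some structural quantity such as the number of endpoints or connected components of $\e_i$ (in the spirit of the $P_i$ of Lemma~\ref{lem:growth 1D, 2}) and $c$ is tuned so that the multiplicative factor $6/5$ is preserved on the potential rather than on $\wt$ alone. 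Once such an accounting is set up, the lemma should reduce to a finite, though substantial, enumeration of the allowable local configurations inside a single block.
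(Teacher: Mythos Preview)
Your central lemma, $\wt(\e_{i+1}) \geq \tfrac{6}{5}\,\wt(\e_i)$, is false in the two-dimensional setting, and the paper says so explicitly just before introducing its own argument: ``there exist error patterns whose preimages have smaller weight.'' The point is that the intermediate errors $\e_i$ are edge sets in the full graph $\bT_k$ (not in $\bT_i$), and one reduction step \emph{adds} the edges of $\he_{i+1}$ to the current error. A short preimage $\e_{i+1}$ can therefore give rise to a much heavier $\e_i$, so no per-block enumeration of local shapes will yield a multiplicative lower bound on Hamming weight. Your option~(i) cannot rescue this, and option~(ii) as stated --- a potential of the form $\wt(\e_i)+c\,X(\e_i)$ --- still keeps $\wt$ as the leading term, which is precisely the quantity that misbehaves.

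The missing idea is to replace Hamming weight by a \emph{reduced weight} $\wtr$: decompose $\e_i$ into edge-disjoint paths and cycles, and for each path count only the graph distance between its endpoints (and for each cycle the length of a shortest homologous cycle). The paper then proves the single-step inequality
\[
\wtr(\e_{i+1}) + P_{i+1} \;\geq\; \tfrac{6}{5}\bigl(\wtr(\e_{i}) + P_{i}\bigr),
\]
where $P_i$ is the number of paths, and only at the final step uses $\wt(\e_k)\geq\wtr(\e_k)$ and $\wt(\e_k)\geq P_k$ to conclude $\wt(\e_k)\geq\tfrac{1}{2}(\wtr(\e_k)+P_k)\geq v_k$. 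The analysis of the lemma is not a block-by-block enumeration either: since $\wtr$ depends only on endpoints, one may replace each path of $\e_i$ by a shortest path, and the case split is over how many ``double-edges'' the decoder could have removed from that shortest path and by how much ($\mu\in\{-2,\dots,2\}$) step~3 could have shifted its endpoints. The tight ratio $6/5$ arises in the case $\mu=-2$ with $\wtr(\e_i)=4$.
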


In the $1$-dimensional case, we showed in Section~\ref{subsec:1-dimensional study} that the weight of any second preimage of an error is twice as large as the weight of the error itself. This technique doesn't extend directly to the $2$-dimensional case because there exist error patterns whose preimages have smaller weight. To avoid this issue, we introduce a notion of {\em reduced weight} for the intermediate errors $\e_i$. 

\textbf{Reduced weight.} To define the reduced weight of an error pattern, we first partition the original error $\e_{k}$ into a union of edge-disjoint paths, whose endpoints are syndrome vertices, and into a union of edge-disjoint cycles. There may be several ways to partition the error $\e_{k}$ but we just choose any one of them. In the first reduction stage, the decoder adds the edges given by $\he_{k}$ to the current error. This modifies the paths and cycles of the combined error $\e_{k-1}$. For example, certain paths may be joined together, some may have their endpoints shifted, and others can disappear or, equivalently, form trivial cycles. Regardless, the added edges given by $\he_{k}$ induce a partition of $\e_{k-1}$ into edge-disjoint paths and cycles. The decoder therefore inductively defines a sequence of partitions of the intermediate errors $\e_i$. We now can define the reduced weight of an error pattern $\e_i$. For any path $\mathbf{p}$ in the partition of $\e_i$ into paths, the reduced weight $\wtr(\mathbf{p})$ of $\mathbf{p}$ is the weight of a shortest path connecting the endpoints of~$\mathbf{p}$. For any cycle $\mathbf{c}$ in the partition of $\e_i$ into cycles, the reduced weight $\wtr(\mathbf{c})$ of $\mathbf{c}$ is the weight of a shortest equivalent cycle. Hence, the reduced weight of a trivial cycle is $0$ and the reduced weight of a non-trivial cycle is $d = 2^{i}$, the girth of the graph $\bT_{i}$. The reduced weight $\wtr(\e_i)$ of $\e_{i}$ is then defined as the sum of the reduced weights of all the paths and cycles composing~$\e_{i}$. 

This notion is still not enough, for it is difficult to control the growth of the reduced weight $\wtr(\e_i)$ for increasing indexes. But whenever the increase from $\wtr(\e_i)$ to $\wtr(\e_{i+1})$ is too small, it is because the number of paths that make up $\e_{i+1}$ has increased significantly. What we therefore do is track the growth of the combined quantity $\wtr(\e_i) + P_i$ where $P_i$ denotes the number of paths in the partition of $\e_i$. The core of the proof of Theorem~\ref{thm:lower bound} lies then in the following lemma.

\begin{lemma}
\label{lem:growth 2D}
For $i \in [[0,k-1]]$, let $\e_{i}$ be an error pattern on the graph $\bT_{i}$ and denote by $P_{i}$ the number of its paths. Let $\e_{i+1}$ be a preimage of $\e_{i}$ and let $P_{i+1}$ denote the number of its paths. Then \mbox{$\wtr(\e_{i+1}) + P_{i+1} \geq \frac{6}{5} (\wtr(\e_{i}) + P_{i})$}.
\end{lemma}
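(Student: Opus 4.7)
My plan is to lift the one-dimensional arguments of Lemmas~\ref{lem:growth 1D, 1}--\ref{lem:growth 1D, 3} to the two-dimensional setting, with the reduced weight $\wtr$ replacing the raw weight $\wt$ so as to be insensitive to the homological flexibility available in two dimensions. The essential new difficulty is that a path of $\e_i$ can be produced by a preimage $\e_{i+1}$ whose underlying $\bT_{i+1}$-edges take shortcuts or diagonal detours that make the raw weight grow very slowly; working with $\wtr$, which measures each component of the partition by its shortest homological representative, restores a meaningful comparison between consecutive levels and rules out the pathological near-stationary cases that a raw-weight argument would have to contend with.

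The argument I would develop is local over the blocks of $\bT_{i+1}$. For each block $B$, record (i)~the syndrome of $\e_{i+1}$ at the nine vertices of $B$, (ii)~the edges of $\e_{i+1}$ lying in $B$, and (iii)~the resulting subset of edges of the $\bT_i$-cell associated with $B$ after the three-step reduction. The multi-step schedule ensured by Proposition~\ref{prop:decoder} makes these actions a function of the local data alone. As a preliminary normalisation, I would follow Lemma~\ref{lem:growth 1D, 2} and argue that we may replace $\e_{i+1}$ by an equivalent preimage of no larger $\wtr(\e_{i+1}) + P_{i+1}$ in which every edge contributes to producing some edge of $\e_i$; edges that do not can be either deleted (lowering $\wtr$) or rerouted into a relevant block (lowering $P_{i+1}$ without raising $\wtr$). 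After this reduction, the edges of $\e_{i+1}$ are charged unambiguously to the paths and cycles of the partition of $\e_i$, and the reduced weight of each $\e_i$-component is bounded from below by the sum over the blocks carrying it of the local $\bT_i$-contributions.

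With these charges in place, the proof reduces to verifying, for every block-local configuration, an inequality of the shape
\[
  \wtr\bigl(\e_{i+1}|_B\bigr) + \Delta P_{i+1}(B) \;\geq\; \tfrac{6}{5}\Bigl(\wtr\bigl(\e_i|_B\bigr) + \Delta P_i(B)\Bigr),
\]
where $\Delta P$ is the share of the path-count charged to $B$ on each side; summing over all blocks then yields the lemma. The main obstacle is establishing the factor $\tfrac{6}{5}$ in the tight configurations. These come from short $\e_i$-paths of $\bT_i$-length~$1$ or~$2$ sitting inside a single block or straddling two neighbouring blocks, for which any attempt to shave one edge off the minimal ``straight'' preimage must split a preimage path and pay an extra $+1$ in $P_{i+1}$. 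Checking that $\tfrac{6}{5}$ is exactly achieved by these tight patterns, and that every other local configuration yields a strictly larger ratio, is where the case analysis will concentrate — in particular, coordinating the actions on edges shared between two adjacent blocks (which are touched by the reduction instructions on both sides) is the step most likely to require delicate bookkeeping.
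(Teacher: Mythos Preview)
Your proposal has a genuine structural gap: the quantity $\wtr$ is \emph{not} block-local, and the inequality you want to sum over blocks cannot hold block by block. By definition $\wtr$ of a path is the graph distance between its endpoints --- a single number attached to the whole component --- so there is no additive decomposition $\wtr(\e_{i+1}) = \sum_B \wtr(\e_{i+1}|_B)$ to invoke, and you never say what $\wtr(\e_{i+1}|_B)$ should mean. Any attempt to define it edge-wise collapses back to ordinary weight and loses exactly the homological robustness you wanted from $\wtr$. Concretely: take $\e_i$ a long horizontal path and remove from the obvious length-$2L$ preimage a ``double-edge'' inside one interior block (this is exactly what steps~1 and~3 of the reduction undo). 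That block now carries \emph{zero} edges of $\e_{i+1}$ yet still produces one edge of $\e_i$; the only compensation is a single $+1$ to $P_{i+1}$, which you would have to charge to that block to get a local ratio $\geq 1$, let alone $\tfrac{6}{5}$. Meanwhile the neighbouring blocks carry two $\e_{i+1}$-edges each for one $\e_i$-edge. The $\tfrac{6}{5}$ emerges only after averaging over the whole component.

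The paper's proof proceeds differently: it first reduces to a single path (or cycle) of $\e_i$ with endpoints $a,b$, so that $\wtr(\e_i)=\mathsf{d_T}(a,b)$, and then runs a \emph{global} optimisation over all preimages of that one component. The levers are (i)~how far the endpoints can move under step~3, quantified by $\mu\in\{-2,-1,0,1,2\}$, and (ii)~how many double-edges $\delta$ can be excised, with a packing bound of the form $4\delta \le 2\,\mathsf{d_T}(a,b)+\mu$ (refined by parity and an endpoint obstruction when $\mu<0$). Minimising $2\,\mathsf{d_T}(a,b)+\mu+1-\delta$ over $\mu$ and $\delta$ gives the ratio $\tfrac{6}{5}$, and the minimum is attained at $\wtr(\e_i)=4$ with $\mu=-2$ --- not at paths of length~1 or~2 as you conjecture (those give ratio $\tfrac{5}{4}$). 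The extremal configurations in the paper's Figure~\ref{fig:examples satisfying growth 2D} are length-4 paths whose preimages are two length-2 fragments. Your block-local case analysis would therefore be looking in the wrong place even if the decomposition could be made to work.
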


\begin{proof}
Let us choose a partition of $\e_{i+1}$ into a union of edge-disjoint paths and into a union of edge-disjoint cycles and consider the induced partition of $\e_{i}$. Note that an error edge of $\e_{i+1}$ may contribute to only one path or cycle of $\e_{i}$ so we can partition the error edges of $\e_{i+1}$ according to which of the paths or cycles of $\e_{i}$ they contribute to. We may thus without loss of generality assume that $\e_{i}$ either consists of a single path or of a single cycle. 

Case $1$: $\e_i$ is a cycle. If $\e_{i}$ is a homologically trivial cycle, then $\wtr(\e_i) = P_i = 0$ and so the inequality $\wtr(\e_{i+1}) + P_{i+1} \geq \frac{6}{5} (\wtr(\e_{i}) + P_{i})$ is clearly satisfied. Let us now suppose that $\e_{i}$ is a non-trivial cycle. \mbox{We want to prove} that $\wtr(\e_{i+1})+ P_{i+1} \geq \frac{6}{5} \wtr(\e_{i})$. Note that $\wtr(\e_{i})$ is the weight of a minimal non-trivial cycle on the graph $\bT_{i}$ equivalent to $\e_i$. Hence, the inequality is independent of the cycle. We may thus think of $\e_{i}$ as any non-trivial cycle that loops the torus along one of its two principal directions. To prove the desired inequality, let us construct the error pattern $\e_{i+1}$ in such a way that $\wtr(\e_{i+1}) + P_{i+1}$ is minimal. To do this, we start from any minimal non-trivial cycle $\mathbf{c}$ and construct $\e_{i+1}$ by reversing the decoding process. Hence, we start with $\wtr(\e_{i+1}) = 2\wt(\mathbf{c})$ and $P_{i+1}=0$. The situation is now similar to the $1$-dimensional case since the decoder may only remove single-edges of $\mathbf{c}$ (step $2$ and $3$ of the decoder) or move its endpoints by one edge (step $3$ of the decoder). The same arguments as the ones found in the proof of Lemma~\ref{lem:growth 1D, 2} then show that $\wtr(\e_{i+1}) + P_{i+1} \geq 2\wtr(\e_{i})$ and so desired inequality is satisfied. 

Case $2$: $\e_{i}$ is a single path joining its two endpoints, say $a$ and $b$. Let us prove that the inequality $\wtr(\e_{i+1})+ P_{i+1} \geq \frac{6}{5}(\wtr(\e_{i}) + 1)$ is satisfied. Note that $\wtr(\e_{i})$ is just the graph distance between the vertices $a$ and $b$ in the graph $\bT_{i}$, denoted $\mathsf{d_{T}}(a,b)$. Hence, the inequality is independent of the actual path, as long as it joins $a$ to $b$. We may thus think of $\e_{i}$ as any path from $a$ to $b$. To prove the desired inequality, let us construct the error pattern $\e_{i+1}$ in such a way that $\wtr(\e_{i+1}) + P_{i+1}$ is minimal. To do this, we start from any minimal path $\mathbf{p}$ joining $a$ to $b$ and we construct $\e_{i+1}$ by reversing the decoding process. Hence, we start with $\wtr(\e_{i+1}) = 2\mathsf{d_{T}}(a,b)$ and $P_{i+1}=1$. The decoder may remove double-edges of $\mathbf{p}$ (steps $1$ and $3$ of the decoder), remove single-edges of $\mathbf{p}$ (step $2$ and $3$ of the decoder) and move its endpoints by one or two edges (step $3$ of the decoder). If we remove a single-edge, then $\wtr(\e_{i+1})$ decreases by $1$ but $P_{i+1}$ increases by~$1$. Hence, removing a single-edge doesn't affect $\wtr(\e_{i+1})+P_{i+1}$ and so we don't use this to construct $\e_{i+1}$. If we remove a double-edge, then $\wtr(\e_{i+1})$ decreases by $2$ but $P_{i+1}$ increases by only $1$. Hence, removing a double-edge decreases the sum $\wtr(\e_{i+1})+P_{i+1}$ by $1$. As a result, we try to remove as many double-edges of $\mathbf{p}$ as we can. Let $\delta$ denote the number of double-edges we remove. Note that if we remove a double-edge from $\mathbf{p}$, we should at least keep in $\e_{i+1}$ the two edges it is connected to, otherwise we would have removed at least three connected edges. Moreover, it must be that between two removed double-edges, we should keep at least two edges in $\e_{i+1}$. Indeed, consider a section of the path $\mathbf{p}$ where we only keep one edge between two removed double-edges. Two examples of this situation are shown in Figure~\ref{fig:undesirable path}. It is easily checked that these double-edges cannot be parallel (where we identify a double-edge with the pair of its endpoints). This contradicts the minimality of the path $\mathbf{p}$ (we assumed that $\mathbf{p}$ is a minimal path joining $a$ to $b$). We could of course construct $\e_{i+1}$ from a locally modified version of $\mathbf{p}$ to account for this but one can verify that this doesn't lessen $\wtr(\e_{i+1})+P_{i+1}$.

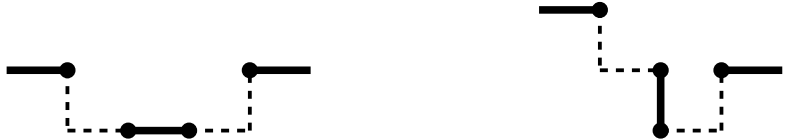
\begin{figure}[htbp]
\centering
\subfloat{
\begin{tikzpicture}[scale=0.8]
\draw[line width=1mm] (0,1) -- (1,1);
\draw[line width=0.5mm, dashed] (1,1) -- (1,0);
\draw[line width=0.5mm, dashed] (1,0) -- (2,0);
\draw[line width=1mm] (2,0) -- (3,0);
\draw[line width=0.5mm, dashed] (3,0) -- (4,0);
\draw[line width=0.5mm, dashed] (4,0) -- (4,1);
\draw[line width=1mm] (4,1) -- (5,1);
\draw[line width=1mm] (1,1) circle (2pt);
\draw[line width=1mm] (2,0) circle (2pt);
\draw[line width=1mm] (3,0) circle (2pt);
\draw[line width=1mm] (4,1) circle (2pt);
\end{tikzpicture}}
\hfil
\subfloat{
\begin{tikzpicture}[scale=0.8]
\draw[line width=1mm] (0,2) -- (1,2);
\draw[line width=0.5mm, dashed] (1,2) -- (1,1);
\draw[line width=0.5mm, dashed] (1,1) -- (2,1);
\draw[line width=1mm] (2,1) -- (2,0);
\draw[line width=0.5mm, dashed] (2,0) -- (3,0);
\draw[line width=0.5mm, dashed] (3,0) -- (3,1);
\draw[line width=1mm] (3,1) -- (4,1);
\draw[line width=1mm] (1,2) circle (2pt);
\draw[line width=1mm] (2,1) circle (2pt);
\draw[line width=1mm] (2,0) circle (2pt);
\draw[line width=1mm] (3,1) circle (2pt);
\end{tikzpicture}}
\caption{Sections of paths where we only keep one edge between two removed double-edges (dashed edges). The removed double-edges are non non-parallel resulting in non-minimal paths.}
\label{fig:undesirable path}
\end{figure}

If we move the endpoints of $\mathbf{p}$, then we can change the distance $2\mathsf{d_{T}}(a,b)$ by at most $2$ since the instructions in $A$ cells move the endpoints in the same direction. We can use this to reduce $\wtr(\e_{i+1})$ but moving the endpoints may also affect the number $\delta$ of double-edges we can remove. Let $\mu \in \lbrace -2,-1,0,1,2 \rbrace$ denote by how much we have modified $2\mathsf{d_{T}}(a,b)$. For each value of $\mu$ we show that $\wtr(\e_{i+1})+P_{i+1} = 2\mathsf{d_{T}}(a,b)+ \mu + 1- \delta \geq \frac{6}{5}(\wtr(\e_{i}) + 1)$. 

Case $\mu = 2$: we have moved the endpoints of $\mathbf{p}$ further apart such that $\wtr(\e_{i+1})$ is increased by $2$. Since we keep at least two edges in $\e_{i+1}$ between two removed double-edges, $4\delta \leq 2\mathsf{d_{T}}(a,b)+2$. Hence, \[ \wtr(\e_{i+1})+P_{i+1} = 2\mathsf{d_{T}}(a,b)+2+1-\delta \geq \frac{3}{2}\mathsf{d_{T}}(a,b) +\frac{5}{2} = \frac{3}{2}\wtr(\e_{i}) +\frac{5}{2} \geq \frac{6}{5}(\wtr(\e_{i}) + 1).\] 

Case $\mu = 1$: we have moved the endpoints of $\mathbf{p}$ further apart such that $\wtr(\e_{i+1})$ is increased by $1$. The same argument as above shows that $4\delta \leq 2\mathsf{d_{T}}(a,b)+1$ and so \[ \wtr(\e_{i+1})+P_{i+1} = 2\mathsf{d_{T}}(a,b)+1+1-\delta \geq \frac{3}{2}\mathsf{d_{T}}(a,b) +\frac{7}{4} = \frac{3}{2}\wtr(\e_{i}) +\frac{7}{4} \geq \frac{6}{5}(\wtr(\e_{i}) + 1).\] 

Case $\mu = 0$: the reduced weight $\wtr(\e_{i+1})$ is unchanged. We have that $4\delta \leq 2\mathsf{d_{T}}(a,b)$ and so \[ \wtr(\e_{i+1})+P_{i+1} = 2\mathsf{d_{T}}(a,b)+1-\delta \geq \frac{3}{2}\mathsf{d_{T}}(a,b) + 1 \geq \frac{5}{4}(\wtr(\e_{i}) + 1) \geq \frac{6}{5}(\wtr(\e_{i}) + 1).\] Indeed, the function, $\wtr(\e_{i}) \mapsto (\frac{3}{2}\wtr(\e_{i}) +1)(\wtr(\e_{i}) +1)^{-1}$ is minimal for $\wtr(\e_{i})=1$ and equals $\frac{5}{4}$.

Case $\mu = -1$: we have moved the endpoints of $\mathbf{p}$ closer together such that $\wtr(\e_{i+1})$ is decreased by~$1$. Note that this requires that $\wtr(\e_{i})$ is at least $2$. To achieve the desired bound, one has to be more careful when deriving an upper bound for $\delta$. First, observe that by a parity argument, we have that $4\delta$ is bounded by $2\mathsf{d_{T}}(a,b)-2$ instead of $2\mathsf{d_{T}}(a,b)-1$. Secondly, since $\delta$ is a non-negative integer, we actually have $4\delta \leq \lfloor 2\mathsf{d_{T}}(a,b)-2 \rfloor$. Therefore, we obtain 

\begin{align*}
\wtr(\e_{i+1})+P_{i+1} &= 2\mathsf{d_{T}}(a,b)-1+1-\delta \geq 2\mathsf{d_{T}}(a,b) - \lfloor \frac{\mathsf{d_{T}}(a,b)-1}{2} \rfloor \\
&= 2\wtr(\e_{i}) - \lfloor \frac{\wtr(\e_{i})-1}{2} \rfloor \geq \frac{5}{4}(\wtr(\e_{i}) + 1) \geq \frac{6}{5}(\wtr(\e_{i}) + 1).
\end{align*}
Indeed, $\wtr(\e_{i}) \mapsto (2\wtr(\e_{i}) - \lfloor \frac{\wtr(\e_{i})-1}{2} \rfloor)(\wtr(\e_{i}) + 1)^{-1}$ is minimal for $\wtr(\e_{i})=3$ and equals $\frac{5}{4}$. 

Case $\mu = -2$: we have moved the endpoints of $\mathbf{p}$ closer together such that $\wtr(\e_{i+1})$ is decreased by $2$. Note that this requires that $\wtr(\e_{i})$ is at least $3$. To achieve the desired bound, one should again be  careful when bounding $\delta$. To avoid edges that are going in opposite directions, we cannot remove a double-edge near one of the extremities of the path~$\mathbf{p}$. It follows that $4\delta$ is bounded above by $\lfloor 2\mathsf{d_{T}}(a,b)-4 \rfloor$. Hence, 
\begin{align*}
\wtr(\e_{i+1})+P_{i+1} &= 2\mathsf{d_{T}}(a,b)-2+1 -\delta \geq 2\mathsf{d_{T}}(a,b) - 1 -\lfloor \frac{\mathsf{d_{T}}(a,b)}{2} \rfloor +1 \\
&= 2\wtr(\e_{i}) - \lfloor \frac{\wtr(\e_{i})}{2} \rfloor \geq \frac{6}{5}(\wtr(\e_{i}) + 1).
\end{align*}
Indeed, $\wtr(\e_{i}) \mapsto (2\wtr(\e_{i}) - \lfloor \frac{\wtr(\e_{i})}{2} \rfloor)(\wtr(\e_{i}) + 1)^{-1}$ is minimal for $\wtr(\e_{i})=4$ and equals $\frac{6}{5}$.
\end{proof}

Note that the inequality in Lemma~\ref{lem:growth 2D} cannot be improved upon since it is actually possible to find error patterns $\e_{i}$ and their preimages such that equality is achieved. Two examples are shown in Figure~\ref{fig:examples satisfying growth 2D}.

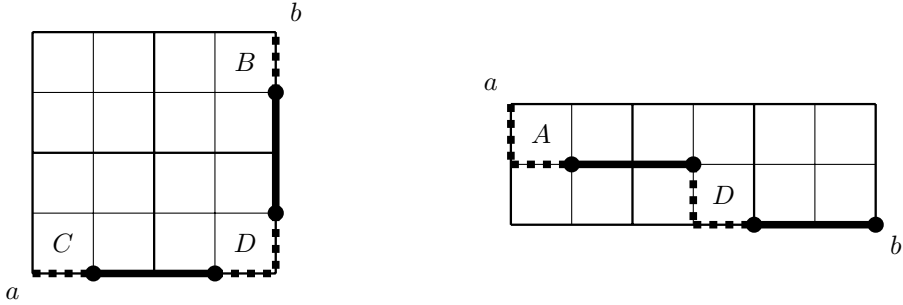
\begin{figure}[htbp]
\centering
\subfloat{
\begin{tikzpicture}[scale=0.8]
\draw[step=1] (0,0) grid (4,4);
\draw[step=2, line width=0.3mm] (0,0) grid (4,4);
\draw[line width=1mm] (1,0) -- (3,0);
\draw[line width=1mm] (4,1) -- (4,3);
\draw[line width=1mm, dashed] (4,3) -- (4,4);
\draw[line width=1mm, dashed] (0,0) -- (1,0);
\draw[line width=1mm, dashed] (3,0) -- (4,0);
\draw[line width=1mm, dashed] (4,1) -- (4,0);
\draw[line width=1mm] (1,0) circle (2pt);
\draw[line width=1mm] (3,0) circle (2pt);
\draw[line width=1mm] (4,1) circle (2pt);
\draw[line width=1mm] (4,3) circle (2pt);
\node at (0.5,0.5) {$C$};
\node at (3.5,3.5) {$B$};
\node at (3.5,0.5) {$D$};
\node at (-0.33,-0.33) {$a$};
\node at (4.33,4.33) {$b$};
\end{tikzpicture}}
\hfil
\raisebox{0.6cm}{
\subfloat{
\begin{tikzpicture}[scale=0.8]
\draw[step=1] (0,0) grid (6,2);
\draw[step=2, line width=0.3mm] (0,0) grid (6,2);
\draw[line width=1mm] (1,1) -- (3,1);
\draw[line width=1mm] (4,0) -- (6,0);
\draw[line width=1mm, dashed] (0,2) -- (0,1);
\draw[line width=1mm, dashed] (0,1) -- (1,1);
\draw[line width=1mm, dashed] (3,1) -- (3,0);
\draw[line width=1mm, dashed] (3,0) -- (4,0);
\draw[line width=1mm] (1,1) circle (2pt);
\draw[line width=1mm] (3,1) circle (2pt);
\draw[line width=1mm] (4,0) circle (2pt);
\draw[line width=1mm] (6,0) circle (2pt);
\node at (0.5,1.5) {$A$};
\node at (3.5,0.5) {$D$};
\node at (-0.33,2.33) {$a$};
\node at (6.33,-0.33) {$b$};
\end{tikzpicture}}}
\caption{Two error patterns (dashed and thick edges) and their preimages (thick edges) such that, in both cases, we have $\wtr(\e_{i}) + P_i = 4 + 1 = 5$ and $\wtr(\e_{i+1})+P_{i+1} = 4 + 2 = 6$. }
\label{fig:examples satisfying growth 2D}
\end{figure}

We now prove that the error-correcting radius $\omega$ is bounded below in terms of $d^{\log_{2}(6/5)}$. 

\begin{proof}[Proof of Theorem~\ref{thm:lower bound}]
We need to show that if an error pattern $\e_{k}$ in the graph $\bT_{k}$ is decoded incorrectly, then its weight satisfies $\wt(\e_{k}) \geq v_{k}$, where $(v_{k})_{k \in \N}$ is the sequence defined by $v_k=\left( \frac{6}{5} \right)^{k-1}$. 

By exhausting all the possible error patterns, one can show that for $k = 1$, the minimal weight of an error pattern in $\bT_{1}$ that is decoded incorrectly is equal to $v_{1}$. Since $\e_{1}$ is decoded incorrectly, we thus have that $\wt(\e_{1}) \geq v_{1}$. Suppose now that $k \geq 2$. By Lemma~\ref{lem:growth 2D}, we have that $\forall i \in [[1,k]]$, $\wtr(\e_{i+1}) + P_{i+1} \geq \frac{6}{5} (\wtr(\e_{i}) + P_{i})$. Hence by repeatedly applying Lemma~\ref{lem:growth 2D}, we obtain \[\wtr(\e_{k}) + P_{k} \geq \left( \frac{6}{5} \right)^{k-1} (\wtr(\e_{1}) + P_{1}) \geq~2\left( \frac{6}{5} \right)^{k-1}.\] Since each path has at least one edge, it follows that the Hamming weight of the error $\e_k$ is at least as large as its number of paths, $\wt(\e_{k}) \geq P_{k}$. Moreover, the Hamming weight of the error $\e_{k}$ is greater or equal than its reduced weight, $\wt(\e_{k}) \geq \wtr(\e_{k})$. Hence, we have that $\wt(\e_{k}) \geq \frac{\wtr(\e_{k}) + P_{k}}{2} \geq \left( \frac{6}{5} \right)^{k-1} = v_{k}$.
\end{proof}

\section{Fractal-like errors for more general renormalisation decoders}
\label{sec:fractal errors}

The analysis of the error-correcting radius of the renormalisation decoder depends of course on the precise instructions defined in Section~\ref{subsec:precise description}. We would like to stress that the existence of fractal-like wrongly decoded sublinear error patterns similar to that of Section~\ref{subsec:wrongly decoded error} will be a feature of any deterministic renormalisation decoder and not just of our particular decoder choice. In this section, we consider more general deterministic and hard-decision renormalisation decoders and show they suffer from wrongly decoded error patterns of weight $d^\alpha$ inside minimal non-trivial cycles.

\subsection{Renormalisation decoders with larger blocks}
\label{subsec:errors for larger blocks}

A first approach to generalise the decoder we introduced in Section~\ref{sec:renormalisation decoder}, is to subdivide the toric graphs with larger blocks. If $b \geq 2$ denotes the size of the blocks of the code, then the code has length $n = 2 b^{2k}$ and distance $d = b^k$. 

We could define new sets of instructions for these decoders but to illustrate our point, we may just ask the following natural property. If a single vertex syndrome is present inside a block, then the decoder shifts it to the closest corner vertex of the block. 

We show the existence of fractal-like wrongly decoded errors for $b = 3$. The first $3$ steps of a wrongly decoded error inside a minimal non-trivial cycle are shown in Figure~\ref{fig:wrongly decoded error larger blocks}. One constructs the following errors $\e_i$ for $i \geq 4$ by repeating the same method that is used to obtain $\e_3$ from $\e_2$ to each path of $\e_{i-1}$. By computing the weight of the errors, we then have $\wt(\e_k) = 2^{k} = d^{\, \log_3(2)} \simeq d^{\, 0.63}$, where $d = 3^{k}$.

\begin{figure}[htbp]
\centering
\begin{tikzpicture}[scale=1.05]
\foreach \i in {0,...,3}{
	\node at (0,\i /2) {{\footnotesize $i=\i$}};
    \draw[line width=0.25mm] (1,\i /2)  -- (14.5,\i /2);
}

\foreach \j in {1,14.5}{
    		\draw[line width=0.25mm] (\j,0) circle (1pt);
}
\foreach \j in {1,5.5,10,14.5}{
    		\draw[line width=0.25mm] (\j,0.5) circle (1pt);
}
\foreach \j in {1,2.5,4,...,14.5}{
    		\draw[line width=0.25mm] (\j,1) circle (1pt);
}
\foreach \j in {1,1.5,...,14.5}{
    		\draw[line width=0.25mm] (\j,1.5) circle (1pt);
}
\draw[line width=0.5mm] (1,0) -- (14.5,0);
\draw[line width=0.5mm] (1,0.5) -- (10,0.5);
\draw[line width=0.5mm] (2.5,1) -- (8.5,1);
\draw[line width=0.5mm] (3,1.5) -- (5,1.5);
\draw[line width=0.5mm] (6,1.5) -- (8,1.5);
\draw[line width=0.5mm] (1,0) circle (2pt);
\draw[line width=0.5mm] (14.5,0) circle (2pt);
\draw[line width=0.5mm] (1,0.5) circle (2pt);
\draw[line width=0.5mm] (10,0.5) circle (2pt);
\draw[line width=0.5mm] (2.5,1) circle (2pt);
\draw[line width=0.5mm] (8.5,1) circle (2pt);
\draw[line width=0.5mm] (3,1.5) circle (2pt);
\draw[line width=0.5mm] (5,1.5) circle (2pt);
\draw[line width=0.5mm] (6,1.5) circle (2pt);
\draw[line width=0.5mm] (8,1.5) circle (2pt);
\end{tikzpicture}
\caption{Construction of the wrongly decoded error pattern for a renormalisation decoder with block length $b = 3$. For $i \in [[0,3]]$, the error $\e_{i}$ is shown in the first horizontal line of the graph $\bT_i$.}
\label{fig:wrongly decoded error larger blocks}
\end{figure}
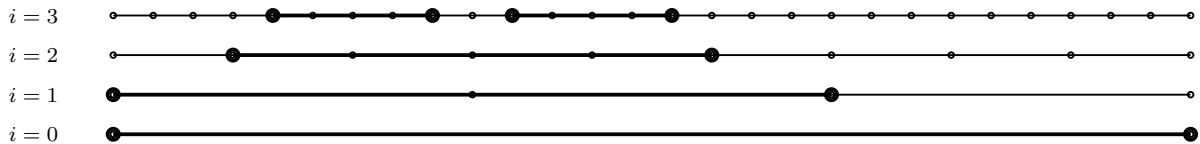 

More generally, for a renormalisation decoder with blocks of size $b$, there exists a fractal-like error pattern of weight $\wt(\e_{k}) = \lfloor \frac{b}{2}+1 \rfloor^{k} = d^{\, \log_b(\lfloor \frac{b}{2}+1 \rfloor)}$, where $d = b^{k}$. Note that for large values of $b$, the result isn't really meaningful since the only condition we assumed on the decoders is weak. Indeed, it is restrictive to assume that only a single vertex syndrome can be present in a block since this means long chains of error edges.

\subsection{Renormalisation decoders with message-passing}
\label{subsec:errors for message-passing}

A second approach to generalise the decoder we introduced in Section~\ref{sec:renormalisation decoder}, is to add message-passing. This means that the decoder will not only base its decisions on a block in the reduction procedure, but it will also consider the blocks around it. If $m$ denotes the depth of the message-passing, then we make our decisions considering all the blocks in a radius~$m$. 

We could define new sets of instructions to explain how the message-passing aids the decoders but to illustrate our point, we again just ask the following natural property. We may naturally assume that if a single vertex syndrome is present inside a radius of $m$ blocks, then the decoder shifts it to the closest corner vertex of the block. 

Let us start to show the existence of fractal-like wrongly decoded errors for $m = 2$. The first $5$ steps of a wrongly decoded error pattern inside a minimal non-trivial cycle are shown in Figure~\ref{fig:wrongly decoded error message-passing}. One constructs the following errors $\e_i$ for $i \geq 6$ by repeating the same method that is used to obtain $\e_4$ from $\e_3$ to each path of the error $\e_{i-1}$ of weight $4$, and the same method that used to obtain $\e_3$ from $\e_2$ to each path of the error $\e_{i-1}$ of weight $2$. We note that the paths of length $2$ have a preimage consisting of one path of length $4$, and the paths of length $4$ have a preimage consisting of one path of length $2$ and one of length $4$. By computing the weight of the errors, we then have for $k \geq 2$, $\wt(\e_k) = 2 F_k$, where $(F_k)_{k \in \mathbb{N}}$ is the Fibonacci sequence. Since for $k \geq 1$, $F_k \leq \phi^{k-1}$, where $\phi = \frac{1+\sqrt{5}}{2}$, we have that $\wt(\e_k) \leq \frac{2}{\phi} \phi^{k} = \frac{2}{\phi} d^{\, \log_2(\phi)} \simeq \frac{2}{\phi}  d^{\, 0.69}$.

\begin{figure}[htbp]
\centering
\begin{tikzpicture}[scale=0.9]
\foreach \i in {0,...,5}{
	\node at (0,\i /2) {{\footnotesize $i=\i$}};
    \draw[line width=0.25mm] (1,\i /2)  -- (17,\i /2);
}

\foreach \j in {1,17}{
    		\draw[line width=0.25mm] (\j,0) circle (1pt);
}
\foreach \j in {1,9,17}{
    		\draw[line width=0.25mm] (\j,0.5) circle (1pt);
}
\foreach \j in {1,5,...,17}{
    		\draw[line width=0.25mm] (\j,1) circle (1pt);
}
\foreach \j in {1,3,...,17}{
    		\draw[line width=0.25mm] (\j,1.5) circle (1pt);
}
\foreach \j in {1,...,17}{
    		\draw[line width=0.25mm] (\j,2) circle (1pt);
}
\foreach \j in {1,1.5,...,17}{
    		\draw[line width=0.25mm] (\j,2.5) circle (1pt);
}
\draw[line width=0.5mm] (1,0) -- (17,0);
\draw[line width=0.5mm] (1,0.5) -- (9,0.5);
\draw[line width=0.5mm] (1,1) -- (9,1);
\draw[line width=0.5mm] (1,1.5) -- (9,1.5);
\draw[line width=0.5mm] (1,2) -- (3,2);
\draw[line width=0.5mm] (4,2) -- (8,2);
\draw[line width=0.5mm] (1,2.5) -- (3,2.5);
\draw[line width=0.5mm] (4,2.5) -- (5,2.5);
\draw[line width=0.5mm] (5.5,2.5) -- (7.5,2.5);
\draw[line width=0.5mm] (1,0) circle (2pt);
\draw[line width=0.5mm] (17,0) circle (2pt);
\draw[line width=0.5mm] (1,0.5) circle (2pt);
\draw[line width=0.5mm] (9,0.5) circle (2pt);
\draw[line width=0.5mm] (1,1) circle (2pt);
\draw[line width=0.5mm] (9,1) circle (2pt);
\draw[line width=0.5mm] (1,1.5) circle (2pt);
\draw[line width=0.5mm] (9,1.5) circle (2pt);
\draw[line width=0.5mm] (1,2) circle (2pt);
\draw[line width=0.5mm] (3,2) circle (2pt);
\draw[line width=0.5mm] (4,2) circle (2pt);
\draw[line width=0.5mm] (8,2) circle (2pt);
\draw[line width=0.5mm] (1,2.5) circle (2pt);
\draw[line width=0.5mm] (3,2.5) circle (2pt);
\draw[line width=0.5mm] (4,2.5) circle (2pt);
\draw[line width=0.5mm] (5,2.5) circle (2pt);
\draw[line width=0.5mm] (5.5,2.5) circle (2pt);
\draw[line width=0.5mm] (7.5,2.5) circle (2pt);
\end{tikzpicture}
\caption{Construction of the wrongly decoded error pattern for a renormalisation decoder with message-passing of depth $m = 1$. For $i \in [[0,5]]$, the error $\e_{i}$ is shown in the first horizontal line of the graph $\bT_i$.}
\label{fig:wrongly decoded error message-passing}
\end{figure}
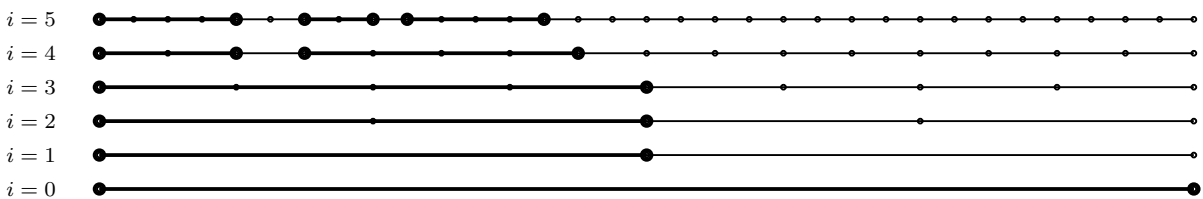

More generally, it is possible to find fractal-like wrongly decoded error patterns of weight $d^\alpha$, even if improving the depth of the message-passing will increase the exponent~$\alpha$.

\section{Concluding comments}
\label{sec:conclusion}
We conjecture that the error-correcting radius $\omega$ scales as $d^{1/2}$. However, the natural expectation that minimal wrongly decoded errors are purely $1$-dimensional is false. It is possible to find $2$-dimensional wrongly decoded error patterns that have smaller weight than the ones introduced in Section~\ref{subsec:wrongly decoded error}. Nevertheless, the weight of these error patterns still scales as $d^{1/2}$. The situation is therefore not so clear-cut.

\section*{Acknowledgements}
We acknowledge support from the Plan France $2030$ through the project NISQ$2$LSQ, ANR-$22$-PETQ-$0006$. GZ acknowledges support from the ANR through the project QUDATA, ANR-$18$-CE$47$-$0010$.


\end{document}